\theoremstyle{definition}
\newtheorem{definition}{Definition}
\newtheorem{example}{Example}
\theoremstyle{plain}
\newtheorem{theorem}{Theorem}
\newtheorem{proposition}{Proposition}
\newtheorem{lemma}{Lemma}
\newtheorem{corollary}{Corollary}
\title{Universal secure rank-metric coding schemes with optimal communication overheads \footnote{Parts of this paper have been accepted for presentation at the IEEE International Symposium on Information Theory, Aachen, Germany, June 2017. \cite{rankefficient-isit}}}
\author{Umberto Mart{\'i}nez-Pe\~{n}as \thanks{umberto@math.aau.dk}}
\affil{Department of Mathematical Sciences, Aalborg University, Denmark}
\date{}
\begin{document}

\maketitle

\begin{abstract}
We study the problem of reducing the communication overhead from a noisy wire-tap channel or storage system where data is encoded as a matrix, when more columns (or their linear combinations) are available. We present its applications to reducing communication overheads in universal secure linear network coding and secure distributed storage with crisscross errors and erasures and in the presence of a wire-tapper. Our main contribution is a method to transform coding schemes based on linear rank-metric codes, with certain properties, to schemes with lower communication overheads. By applying this method to pairs of Gabidulin codes, we obtain coding schemes with optimal information rate with respect to their security and rank error correction capability, and with universally optimal communication overheads, when $ n \leq m $, being $ n $ and $ m $ the number of columns and number of rows, respectively. Moreover, our method can be applied to other families of maximum rank distance codes when $ n > m $. The downside of the method is generally expanding the packet length, but some practical instances come at no cost.

\textbf{Keywords:} Communication overheads, crisscross error-correction, decoding bandwidth, information-theoretical security, rank-metric codes. 

\textbf{MSC:} 94A60, 94A62, 94B99.
\end{abstract}

\section{Introduction} \label{sec intro}

Universal secure linear network coding with errors and erasures was first studied in \cite{silva-universal}, where rank-metric coding schemes were proposed to protect messages sent over a linearly coded network from link errors, erasures and information leakage to a wire-tapper. Similarly, rank-metric codes have been applied to storage systems where data is stored as a matrix and where errors and erasures affect several rows and/or columns, also called crisscross errors and erasures \cite{roth}. These errors and erasures have been recently motivated by correlated and mixed failures in distributed storage systems where data is stored in several data centers (columns), which in turn store several blocks of data (rows). See \cite{swanan}.

In this paper, we study how to reduce the communication overhead from such a noisy wire-tap channel or storage system to the receiver, when more columns, or their linear combinations, are available: Less ingoing links to the receiver fail in the network case, or more data centers are available and contacted in the distributed storage case. As it has been noticed in secret sharing in the literature \cite{rawad, efficient, wangwong}, which corresponds to Hamming-metric erasure-correction and security, if more pieces of data (columns in our case) are available, they can be preprocessed via subpacketization so that the overall transmitted information from the channel or storage system to the receiver is reduced. 

A similar concept of subpacketization has been recently developed for Reed-Solomon codes in \cite{guruswami}. In another direction, coding schemes recovering part of the encoded data (a node in a storage system, for instance), with respect to the Hamming metric, have already been studied, giving rise to \textit{regenerating codes} \cite{dimakis1, dimakis2, rashmi}, which reduce communication bandwidth, and \textit{locally repairable codes} \cite{localitysymbols, pyramidcodes, tamobarg}, which reduce the number of contacted nodes. The latter codes have been recently extended to the rank metric in \cite{swanan}. In contrast, our aim is to recover the whole uncoded data while reducing the communication bandwidth, as in \cite{rawad, efficient, wangwong}, but with respect to the rank metric and, as a consequence, with respect to the crisscross metric.

We illustrate and motivate the problem with a pair of examples. The details of the constructions will be given in Subsection \ref{subsec using gabidulin}.

\begin{example} \label{example 1}
Consider a linearly coded network, as in \cite[Sec. VII-A]{silva-universal}, over a finite field of size $ q = 256 $ ($ 8 $-bit symbols), with packet length $ m = 2048 $, number of outgoing links from the source $ n = 40 $, at least $ N \geq n $ ingoing links to the sink, and where $ \mu \leq 8 $ links may be wire-tapped and $ \rho \leq 16 $ ingoing links to the sink may fail.

In \cite[Th. 11]{silva-universal}, a coding scheme is given with optimal information rate $ 16 / 40 $, able to correct the given number of erasures and secure under the given number of observations over such network, independently of its inner code (\textit{universally}). The overall communication overhead from the last ingoing links to the sink is of $ 8 $ packets: The source wants to transmit $ 16 $ uncoded packets and the sink receives $ 24 $ encoded packets. 

Thanks to Theorem \ref{theorem using gabidulin} and dividing each packet into $ 32 $ subpackets of length $ 64 $ each, we will obtain a coding scheme with the same parameters, but such that the overall communication overhead at the ingoing links to the sink is of $ 4 $ packets (the minimum possible) if none of them fail (only $ 20 $ packets are received by the sink).
\end{example}

\begin{example} \label{example 2}
Let again $ n = 40 $ and $ m = 2048 $, and consider a distributed storage system where data is stored as an $ m \times n $ matrix over the same finite field ($ q = 256 $), where each column corresponds to a data center that stores $ n $ symbols over $ \mathbb{F}_q $, that is, $ 40 $ $ 8 $-bit symbols. Assume that $ \rho $ data centers may fail or not be available, errors occur along $ t $ rows and/or columns due to certain correlations, and a wire-tapper eavesdrops $ \mu $ data centers. Assume also that $ \rho + 2t \leq 16 $ and $ \mu \leq 8 $.

As in the previous example, the use of a pair of maximum rank distance codes allows to obtain the desired reliability and security while achieving the optimal information rate $ 16 / 40 $ (see \cite{roth}), with a communication overhead of $ 8 $ packets from the contacted data centers to the receiver. Again, in this work we obtain a coding scheme with the same parameters but where the communication overhead is reduced to $ 4 $ packets (the minimum possible) if no errors occur and all data centers are available and contacted.
\end{example}

The paper is organized as follows: In Section \ref{sec info theory setting}, we establish the information-theoretical setting, defining coherent linearized noisy wire-tap channels, which we take from \cite{silva-universal}, and we establish a method of subpacketization that allows to use linear codes over the extension field. In Section \ref{sec comm efficiency for lin channels}, we define communication overheads for these linearized channels and give lower bounds on these parameters similar to those in \cite{efficient}. In Section \ref{sec general construction}, we give the main contribution of this paper, which is a general method to transform coding schemes based on pairs of linear rank-metric codes, with certain properties, into coding schemes with lower communication overheads. In Section \ref{sec optimal schemes}, we apply Gabidulin codes \cite{gabidulin, roth} to obtain coding schemes with optimal information rates and communication overheads for $ n \leq m $, which can be seen as a rank-metric analog of the constructions in \cite{rawad, efficient}. However, our method allows us to correct errors, and not only erasures as in the secret sharing case \cite{rawad, efficient}, and can be applied to other families of maximum rank distance codes, such as those in \cite{reducible} for $ n > m $. Finally, in Section \ref{sec applications}, we discuss the applications in universal secure linear network coding and secure distributed storage with crisscross errors and erasures.

\subsection*{Notation}

Throughout the paper, we fix a prime power $ q $ and positive integers $ m $, $ n $, $ N $, $ \alpha $, $ \ell $, $ t $, $ \rho $ and $ \mu $. We denote by $ \mathbb{F}_q $ the finite field with $ q $ elements, $ \mathbb{F}_q^n $ denotes the set of row vectors of length $ n $ over $ \mathbb{F}_q $, and $ \mathbb{F}_q^{m \times n} $ denotes the set of $ m \times n $ matrices over $ \mathbb{F}_q $. In this paper, a code is a subset of either $ \mathbb{F}_q^n $ or $ \mathbb{F}_q^{m \times n} $, whose linearity properties are specified in each case. For an $ \mathbb{F}_{q^m} $-linear code $ \mathcal{C} \subseteq \mathbb{F}_{q^m}^n $, we will denote by $ \mathcal{C}^\perp $ its dual code with respect to the usual $ \mathbb{F}_{q^m} $-bilinear inner product. We also use the notation $ [n] = \{ 1,2, \ldots, n \} $ and $ [m,n] = \{ m, m+1, \ldots, n \} $ whenever $ m \leq n $, and we denote by $ H(X) $, $ H(X \mid Y) $ and $ I(X;Y) $ the entropy, conditional entropy and mutual information of the random variables $ X $ and $ Y $, respectively (see \cite{cover}), where logarithms will always be taken with base $ q $.

\section{Information-theoretical setting and preliminaries} \label{sec info theory setting}

\subsection{Coherent linearized channels and coset coding schemes}

We will consider the secret message $ S $ to be a uniform random variable in $ \mathcal{S} = \mathbb{F}_q^{\alpha m \times \ell} $, and we will consider noisy wire-tap channels (which can also be thought of as distributed storage systems) as given in \cite{silva-universal}:

\begin{definition}[\textbf{Coherent linearized channel \cite{silva-universal}}] \label{def lin channel}
We define a coherent linearized noisy wire-tap channel with $ t $ errors, $ \rho $ erasures with erasure matrix $ A \in \mathbb{F}_q^{N \times n} $ of rank at least $ n-\rho $, and $ \mu $ observations as a channel with input a variable $ X \in \mathcal{X} = \mathbb{F}_q^{\alpha m \times n} $, output to the receiver $ Y \in \mathcal{Y} = \mathbb{F}_q^{\alpha m \times N} $, and output to the eavesdropper $ W \in \mathcal{W} = \mathbb{F}_q^{\alpha m \times \mu} $, together with a conditional probability distribution $ P(Y,W | X) $ such that
\begin{equation*}
\begin{split}
\mathcal{Y}_X = \{ Y \in \mathbb{F}_q^{\alpha m \times N} \mid & Y = XA^T + E, \\
& E \in \mathbb{F}_q^{\alpha m \times N}, {\rm Rk}(E) \leq t \}, \\
\mathcal{W}_X = \{ W \in \mathbb{F}_q^{\alpha m \times \mu} \mid & W = XB^T, B \in \mathbb{F}_q^{\mu \times n} \},
\end{split}
\end{equation*}
where $ \mathcal{Y}_X = \{ Y \in \mathcal{Y} \mid P(Y|X) > 0 \} $ and $ \mathcal{W}_X = \{ W \in \mathcal{W} \mid P(W|X) > 0 \} $, for a given $ X \in \mathcal{X} $. 
\end{definition}

In \cite{silva-universal}, it is shown that a linearly coded network over $ \mathbb{F}_q $ with link errors, erasures and information leakage, and where the last coding coefficients are known to the receiver, can be modelled as a coherent linearized noisy wire-tap channel. We will focus on this scenario and discuss how to translate the results to the distributed storage scenario with crisscross errors and erasures in Subsection \ref{subsec distributed}, since the latter can be seen as a simpler case.

As encoders, we consider coset coding schemes as in \cite[Def. 7]{rgrw}, which are a particular case of those in \cite{silva-universal}.

\begin{definition}[\textbf{Coset coding schemes \cite{rgrw}}]
A coset coding scheme over the field $ \mathbb{F}_q $ with secret message set $ \mathcal{S} = \mathbb{F}_q^{\alpha m \times \ell} $ and coded message set $ \mathcal{X} = \mathbb{F}_q^{\alpha m \times n} $ is a randomized function
$$ F : \mathbb{F}_q^{\alpha m \times \ell} \longrightarrow \mathbb{F}_q^{\alpha m \times n}, $$
where, for every $ S \in \mathbb{F}_q^{\alpha m \times \ell} $, $ C = F(S) $ is the uniform random variable over a set $ \mathcal{C}_S \subseteq \mathbb{F}_q^{\alpha m \times n} $. To allow correct decoding, we also assume that $ \mathcal{C}_S \cap \mathcal{C}_T = \varnothing $ if $ S \neq T $. Finally, we define the information rate of the scheme as
\begin{equation}
R = \frac{\log_q(\# \mathcal{S})}{\log_q(\# \mathcal{X})} = \frac{\alpha m \ell}{\alpha m n} = \frac{\ell}{n}.
\label{eq def info rate}
\end{equation}
\end{definition}

In linear network coding, universal reliability and security means correcting a number of link errors and erasures and being secure under a number of link observations, independently of the network inner code. This leads in \cite{silva-universal} to the following definition:

\begin{definition}[\textbf{Universal schemes \cite{silva-universal}}] \label{def universal schemes}
We say that the coset coding scheme $ F : \mathbb{F}_q^{\alpha m \times \ell} \longrightarrow \mathbb{F}_q^{\alpha m \times n} $ is:
\begin{enumerate}
\item
Universally $ t $-error and $ \rho $-erasure-correcting if, for every coherent linearized channel with $ t $ errors, $ \rho $ erasures and erasure matrix $ A \in \mathbb{F}_q^{N \times n} $, there exists a decoding function $ D_A : \mathcal{Y} \longrightarrow \mathcal{S} $ such that
$$ D_A(Y) = S, $$
for all $ Y \in \bigcup_{X \in \mathcal{C}_S} \mathcal{Y}_X $ and all $ S \in \mathcal{S} $.
\item
Universally secure under $ \mu $ observations if, for every coherent linearized channel with $ \mu $ observations, it holds that 
$$ H(S | W) = H(S), $$
or equivalently $ I(S;W) = 0 $, for all $ W \in \bigcup_{X \in \mathcal{C}_S} \mathcal{W}_X $ and all $ S \in \mathcal{S} $.
\end{enumerate}
\end{definition}

\subsection{Using linear codes over the extension field}

In what follows, we will make use of codes that are linear over the extension field $ \mathbb{F}_{q^m} $. To that end, we need to see how to identify matrices in $ \mathbb{F}_q^{\alpha m \times n} $ with matrices in $ \mathbb{F}_{q^m}^{\alpha \times n} $:

\begin{definition} \label{def phi map}
Fix a basis $ \gamma_1, \gamma_2, \ldots, \gamma_m $ of $ \mathbb{F}_{q^m} $ as a vector space over $ \mathbb{F}_q $, and define the map 
\begin{equation}
\varphi_n : \mathbb{F}_{q^m}^{\alpha \times n} \longrightarrow \mathbb{F}_q^{\alpha m \times n}
\label{eq def phi map}
\end{equation}
as follows: Given a matrix $ C \in \mathbb{F}_{q^m}^{\alpha \times n} $ with entries $ c_{i,j} \in \mathbb{F}_{q^m} $, for $ i = 1,2, \ldots, \alpha $ and $ j = 1,2, \ldots, n $, we define $ \varphi_n(C) $ as the unique $ \alpha m \times n $ matrix with coefficients $ d_{l,j} \in \mathbb{F}_q $, for $ l = 1,2, \ldots, \alpha m $ and $ j = 1,2, \ldots, n $, such that
$$ c_{i,j} = \sum_{u = 1}^m d_{(i-1)m + u,j} \gamma_u, $$
for $ i = 1,2, \ldots, \alpha $ and $ j = 1,2, \ldots, n $. Finally, we define the rank over $ \mathbb{F}_q $ of a matrix $ E \in \mathbb{F}_{q^m}^{\alpha \times n} $ as the rank over $ \mathbb{F}_q $ of the matrix $ \varphi_n(E) \in \mathbb{F}_q^{\alpha m \times n} $, and we denote it by $ {\rm Rk}_q(E) $.
\end{definition}

The key result is that the effect of coherent linearized noisy wire-tap channels in Definition \ref{def lin channel} remains unchanged by the map $ \varphi_n $, as we will now see:

\begin{lemma}
Let $ C \in \mathbb{F}_{q^m}^{\alpha \times n} $, $ A \in \mathbb{F}_q^{N \times n} $ and $ E \in \mathbb{F}_{q^m}^{\alpha \times N} $. It holds that
$$ \varphi_N \left( C A^T + E \right) = \varphi_n(C) A^T + \varphi_N(E), $$
and $ {\rm Rk}_q(E) = {\rm Rk}(\varphi_N(E)) $ by definition.
\end{lemma}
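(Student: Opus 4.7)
The plan is to observe that $\varphi_N$ is $\mathbb{F}_q$-linear (since the basis expansion $c = \sum_u d_u \gamma_u$ is unique and $\mathbb{F}_q$-linear in the coefficients), so it splits the claimed identity into two pieces: the additivity reduces the problem to proving
\[
\varphi_N(CA^T) = \varphi_n(C) A^T,
\]
and then the identity $\varphi_N(E) = \varphi_N(E)$ on the error term is trivial. The rank assertion is immediate from Definition \ref{def phi map}, so there is nothing to prove there.

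To establish the displayed identity, I would just compute entries. Write $c_{i,j} = \sum_{u=1}^m d_{(i-1)m+u,j} \gamma_u$ with $d_{l,j} \in \mathbb{F}_q$, so that $\varphi_n(C)$ has entries $d_{l,j}$. For indices $i \in [\alpha]$, $k \in [N]$, the $(i,k)$-entry of $CA^T$ equals
\[
\sum_{j=1}^n c_{i,j} A_{k,j} = \sum_{j=1}^n \sum_{u=1}^m d_{(i-1)m+u,j}\, A_{k,j}\, \gamma_u = \sum_{u=1}^m \left( \sum_{j=1}^n d_{(i-1)m+u,j} A_{k,j} \right) \gamma_u.
\]
The crucial point, which drives the whole lemma, is that $A_{k,j} \in \mathbb{F}_q$, so that the inner sum $\sum_j d_{(i-1)m+u,j} A_{k,j}$ lies in $\mathbb{F}_q$ and we may identify it (by uniqueness of the basis expansion) with the $((i-1)m+u,k)$-entry of $\varphi_N(CA^T)$. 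But this inner sum is precisely the $((i-1)m+u,k)$-entry of $\varphi_n(C) A^T$, proving the claim on each entry.

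Combining the two steps via the additivity of $\varphi_N$ gives $\varphi_N(CA^T + E) = \varphi_N(CA^T) + \varphi_N(E) = \varphi_n(C)A^T + \varphi_N(E)$, as required. The only genuine content is the interchange of the basis expansion with right multiplication by a matrix over the base field $\mathbb{F}_q$; there is no real obstacle, but one must be careful to organise the double sum so that the $\mathbb{F}_q$-scalars are grouped with the $d_{l,j}$ before invoking uniqueness of the expansion.
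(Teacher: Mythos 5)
Your proposal is correct and follows essentially the same route as the paper's proof: reduce to the case $E=0$ by additivity, expand the entries of $CA^T$ as a double sum, swap the order of summation, and use that the entries of $A$ lie in $\mathbb{F}_q$ together with uniqueness of the basis expansion. No substantive difference.
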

\begin{proof}
The additive property of $ \varphi_n $ is clear from the definition, so we may assume that $ E = 0 $. Denote the entries of $ C $ and $ \varphi_n(C) $ as in Definition \ref{def phi map}, and let $ a_{v,j} $, $ \widetilde{c}_{i,j} $ and $ \widetilde{d}_{l,j} $ be the entries of $ A $, $ CA^T $ and $ \varphi_N \left( CA^T \right) $, respectively, for $ v = 1,2, \ldots, N $, $ i = 1,2, \ldots, \alpha $, $ l = 1,2, \ldots, \alpha m $ and $ j = 1,2, \ldots, n $. It holds that
$$ \widetilde{c}_{i,j} = \sum_{v = 1}^n c_{i,v} a_{j,v} = \sum_{v=1}^n \left( \sum_{u = 1}^m d_{(i-1)m + u, v} \gamma_u \right) a_{j,v} $$
$$ = \sum_{u = 1}^m \left( \sum_{v=1}^n d_{(i-1)m + u, v} a_{j,v} \right) \gamma_u, $$
but it also holds that
$$ \widetilde{c}_{i,j} = \sum_{u = 1}^m \widetilde{d}_{(i-1)m + u,j} \gamma_u, $$
for $ i = 1,2, \ldots, \alpha $ and $ j = 1,2, \ldots, n $. Since $ \gamma_1, \gamma_2, \ldots, \gamma_m $ is a basis of $ \mathbb{F}_{q^m} $ over $ \mathbb{F}_q $ and $ \sum_{v=1}^n d_{(i-1)m + u, v} a_{j,v} \in \mathbb{F}_q $, for $ i = 1,2, \ldots, \alpha $, and $ j = 1,2, \ldots, n $, we conclude that
$$ \widetilde{d}_{(i-1)m + u,j} = \sum_{v=1}^n d_{(i-1)m + u, v} a_{j,v}, $$
for $ i = 1,2, \ldots, \alpha $, $ u = 1,2, \ldots, m $ and $ j = 1,2, \ldots, n $, which means that $ \varphi_N \left( CA^T \right) = \varphi_n(C) A^T $, and the result follows.
\end{proof}

Hence we may identify the sets $ \mathbb{F}_q^{\alpha m \times n} $ and $ \mathbb{F}_{q^m}^{\alpha \times n} $, seen as $ \mathbb{F}_q $-linear vector spaces together with the metric given by the rank and the function $ {\rm Rk}_q $, respectively. We will do this repeatedly throughout the paper.

To conclude the section, we recall the construction of coset coding schemes in \cite[Def. 4]{rgrw} based on pairs of $ \mathbb{F}_{q^m} $-linear codes with $ \alpha = 1 $ (no subpacketization). 

\begin{definition} [\textbf{Nested coset coding schemes \cite{rgrw}}] \label{def nested schemes}
A nested coset coding scheme (with $ \alpha = 1 $) is a coset coding scheme such that $ \mathcal{C}_S = \varphi(S) + \mathcal{C}_2 $, where $ \mathcal{C}_2 \subsetneqq \mathcal{C}_1 \subseteq \mathbb{F}_{q^m}^n $ are $ \mathbb{F}_{q^m} $-linear codes and $ \varphi : \mathbb{F}_{q^m}^{\ell} \longrightarrow \mathcal{W} $ is a vector space isomorphism over $ \mathbb{F}_{q^m} $, for an $ \mathbb{F}_{q^m} $-linear space $ \mathcal{W} \subseteq \mathbb{F}_{q^m}^n $ such that $ \mathcal{C}_1 = \mathcal{C}_2 \oplus \mathcal{W} $, where $ \oplus $ denotes the direct sum of vector spaces.
\end{definition}

To measure the reliability and security of these coding schemes, we need the concept of \textit{relative minimum rank distance}, which is a particular case of \cite[Def. 2]{rgrw}:

\begin{definition} [\textbf{Relative minimum rank distance \cite{rgrw}}]
Given $ \mathbb{F}_{q^m} $-linear codes $ \mathcal{C}_2 \subsetneqq \mathcal{C}_1 \subseteq \mathbb{F}_{q^m}^n $, we define their relative minimum rank distance as
$$ d_R \left( \mathcal{C}_1, \mathcal{C}_2 \right) = \min \left\lbrace {\rm Rk}_q(\mathbf{e}) \mid \mathbf{e} \in \mathcal{C}_1, \mathbf{e} \notin \mathcal{C}_2 \right\rbrace . $$
The minimum rank distance of a single code $ \mathcal{C} \subseteq \mathbb{F}_{q^m}^n $ is defined as $ d_R(\mathcal{C}) = d_R(\mathcal{C}, \{ \mathbf{0} \}) $.
\end{definition} 

The next result, which follows directly from \cite[Cor. 5 and Th. 4]{rgrw}, gives the mentioned reliability and security performance of nested coset coding schemes. Recall that we denote by $ \mathcal{C}^\perp $ the dual of an $ \mathbb{F}_{q^m} $-linear code $ \mathcal{C} \subseteq \mathbb{F}_{q^m}^n $ with respect to the usual $ \mathbb{F}_{q^m} $-bilinear inner product in $ \mathbb{F}_{q^m}^n $.

\begin{lemma} [\textbf{\cite{rgrw}}] \label{lemma correction capability}
Given $ \mathbb{F}_{q^m} $-linear codes $ \mathcal{C}_2 \subsetneqq \mathcal{C}_1 \subseteq \mathbb{F}_{q^m}^n $, the nested coset coding scheme in Definition \ref{def nested schemes} is universally $ t $-error and $ \rho $-erasure-correcting if, and only if, $ 2t + \rho < d_R(\mathcal{C}_1, \mathcal{C}_2) $, and is universally secure under $ \mu $ observations if, and only if, $ \mu < d_R \left( \mathcal{C}_2^\perp, \mathcal{C}_1^\perp \right) $.
\end{lemma}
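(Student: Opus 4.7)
The plan is to derive both statements as essentially direct translations of the characterizations of error/erasure correction and universal security for nested coset coding schemes given in \cite{rgrw}. Since a nested coset coding scheme with $ \alpha = 1 $ encodes $ S \in \mathbb{F}_{q^m}^\ell $ as the uniform random variable $ C = F(S) $ on the coset $ \varphi(S) + \mathcal{C}_2 \subseteq \mathcal{C}_1 $, the secret $ S $ is uniquely determined by the coset $ C + \mathcal{C}_2 \in \mathcal{C}_1 / \mathcal{C}_2 $, so both reliability and security reduce to statements about this quotient.

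For the reliability part, I would first reduce universal decodability to the following unique-coset-decoding property: for any erasure matrix $ A \in \mathbb{F}_q^{N \times n} $ with $ {\rm Rk}(A) \geq n - \rho $ and any two error vectors $ E, E' \in \mathbb{F}_{q^m}^N $ of $ \mathbb{F}_q $-rank at most $ t $, the equation $ CA^T + E = C'A^T + E' $ with $ C, C' \in \mathcal{C}_1 $ must force $ C - C' \in \mathcal{C}_2 $. The key rank inequality is $ {\rm Rk}_q((C - C')A^T) \geq {\rm Rk}_q(C - C') - \rho $, which follows from Sylvester's rank inequality applied to $ A^T $ of rank at least $ n - \rho $. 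Combined with $ {\rm Rk}_q(E' - E) \leq 2t $ this gives $ {\rm Rk}_q(C - C') \leq 2t + \rho $. Under the hypothesis $ 2t + \rho < d_R(\mathcal{C}_1, \mathcal{C}_2) $, this forces $ C - C' \in \mathcal{C}_2 $, producing a well-defined decoder $ D_A $. For the converse, I would take a minimum-weight coset leader $ c \in \mathcal{C}_1 \setminus \mathcal{C}_2 $ with $ {\rm Rk}_q(c) \leq 2t + \rho $ and use its rank decomposition to manufacture explicit matrices $ A $, $ E $, $ E' $ making the cosets $ \mathcal{C}_2 $ and $ c + \mathcal{C}_2 $ indistinguishable from a single output $ Y $.

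For the security part, I would appeal to the duality between erasure correction and universal security established in \cite{rgrw}. Each observation has the form $ W = XB^T $ with $ B \in \mathbb{F}_q^{\mu \times n} $, so $ W $ depends on $ X $ only through the projection onto the $ \mathbb{F}_q $-row space of $ B $, a subspace of $ \mathbb{F}_q^n $ of dimension at most $ \mu $. The condition $ I(S; W) = 0 $ for every such $ B $ is equivalent, via a standard coset-counting argument on the surjection $ \mathcal{C}_1 \to \mathcal{C}_1 / \mathcal{C}_2 $ and the uniform distribution of $ C $ on $ \varphi(S) + \mathcal{C}_2 $, to the statement that no nonzero $ \mathbb{F}_q $-linear functional of rank at most $ \mu $ separates cosets of $ \mathcal{C}_2 $ inside $ \mathcal{C}_1 $. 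The latter is precisely the dual-distance condition $ \mu < d_R(\mathcal{C}_2^\perp, \mathcal{C}_1^\perp) $, which is the content of \cite[Th. 4]{rgrw} and can be invoked directly rather than re-proved.

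The main obstacle will be the tightness (``only if'') directions in both parts: translating the existence of a low-rank coset leader into an explicit channel instance that actually confuses two secrets, or an explicit observation that actually leaks information about the secret. In each case, one must package the rank decomposition of a low-weight vector into a genuine triple $ (A, E, E') $ realizable by the channel, or into a genuine observation matrix $ B $; the relevant constructions follow the template of \cite{rgrw} and the earlier universal secure linear network coding analysis of \cite{silva-universal}.
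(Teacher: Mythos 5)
Your proposal is correct and follows essentially the same route as the paper, which offers no argument of its own beyond citing \cite[Cor.~5 and Th.~4]{rgrw}: your Sylvester-rank-inequality argument for the ``if'' direction of reliability, the coset-leader construction for the converse, and the appeal to the duality theorem of \cite{rgrw} for security are exactly the arguments behind that citation. The only part left schematic --- packaging a low-rank coset leader into an explicit confusing triple $(A,E,E')$ or observation matrix $B$ --- is standard and is likewise left to the references by the paper itself.
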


Observe that $ d_R(\mathcal{C}_1) \leq d_R(\mathcal{C}_1, \mathcal{C}_2) $ and $ d_R \left( \mathcal{C}_2^\perp \right) \leq d_R \left( \mathcal{C}_2^\perp, \mathcal{C}_1^\perp \right) $, hence the minimum rank distances of $ \mathcal{C}_1 $ and $ \mathcal{C}_2^\perp $ give sufficient conditions on the number of correctable errors and erasures and on the number of links that may be wire-tapped without information leakage, respectively.

\section{Communication overheads in coherent linearized channels} \label{sec comm efficiency for lin channels}

In this section we formalize how, as in communication efficient secret sharing \cite{rawad, efficient, wangwong}, if a coset coding scheme is able to correct $ t $ errors and $ \rho $ erasures, but $ d > n - \rho $ pieces of information are available (the rank of $ A $ is at least $ d $), then we may reduce the communication overhead from the channel to the receiver by making use of the additional $ d - n + \rho > 0 $ linearly independent rows of $ A $. Observe that only erasures, and not errors, are considered in the Hamming analog described in \cite{rawad, efficient, wangwong}. 

Let $ F : \mathbb{F}_q^{\alpha m \times \ell} \longrightarrow \mathbb{F}_q^{\alpha m \times n} $ be a coset coding scheme, let $ A \in \mathbb{F}_q^{d \times n} $ be of rank $ d $ (if $ A \in \mathbb{F}_q^{N \times n} $ has rank $ d < N $, we may delete or ignore linearly dependent rows), let $ \mathbf{a}_1, \mathbf{a}_2, \ldots, \mathbf{a}_d \in \mathbb{F}_q^n $ be its rows, and let $ E_{A,i} : \mathbb{F}_q^{\alpha m} \longrightarrow \mathbb{F}_q^{\beta_i m} $ be preprocessing functions, where $ 1 \leq \beta_i \leq \alpha $, for $ i = 1,2, \ldots, d $. We define their correction capability with respect to $ F $ as follows:

\begin{definition} \label{def preprocessing f correction}
For a full-rank matrix $ A \in \mathbb{F}_q^{d \times n} $, the preprocessing functions $ E_{A,i} : \mathbb{F}_q^{\alpha m} \longrightarrow \mathbb{F}_q^{\beta_i m} $, for $ i = 1,2, \ldots, d $, are $ t $-error-correcting with respect to the coset coding scheme $ F : \mathbb{F}_q^{\alpha m \times \ell} \longrightarrow \mathbb{F}_q^{\alpha m \times n} $ if there exists a decoding function $ D_A : \prod_{i = 1}^d \mathbb{F}_q^{\beta_i m} \longrightarrow \mathbb{F}_q^{\alpha m \times \ell} $ such that
\begin{equation} 
 D_A \left( \left( E_{A,i} \left( C \mathbf{a}_i^T + \mathbf{e}_i \right) \right) _{i = 1}^d \right) = S, 
\label{eq condition preprocessing functions 1}
\end{equation}
for all $ C \in \mathcal{C}_S $, all $ S \in \mathbb{F}_q^{\alpha m \times \ell} $ and all error matrices $ E \in \mathbb{F}_q^{\alpha m \times d} $ of rank at most $ t $ with columns $ \mathbf{e}_1, \mathbf{e}_2, \ldots, \mathbf{e}_d \in \mathbb{F}_q^{\alpha m} $.
\end{definition}

We define then the decoding bandwidth and communication overhead as $ q $-analogs of those in \cite[Def. 2]{efficient}:

\begin{definition} [\textbf{Decoding bandwidth and communication overhead}] \label{def decoding and comm overhead}
For a full-rank matrix $ A \in \mathbb{F}_q^{d \times n} $ and functions $ E_{A,i} : \mathbb{F}_q^{\alpha m} \longrightarrow \mathbb{F}_q^{\beta_i m} $, for $ i = 1,2, \ldots, d $, we define their decoding bandwidth and communication overhead, respectively, as
$$ {\rm DB}(A) = \frac{\sum_{i = 1}^d \beta_i m}{\alpha m} = \sum_{i = 1}^d \frac{\beta_i}{\alpha} \quad \textrm{and} \quad {\rm CO}(A) = \sum_{i = 1}^d \frac{\beta_i}{\alpha} - \ell. $$
\end{definition}

Thus, if a \textit{packet} is a vector in $ \mathbb{F}_q^{\alpha m} $, then the decoding bandwidth is the amount (which need not be an integer due to the subpacketization) of packets that the receiver obtains, or needs to obtain, from the channel, and the communication overhead is the difference with respect to the original number of uncoded packets.

Observe that, fixing $ n $ and $ \ell $ (thus the information rate), we may only focus on communication overheads, since both behave equally.

To measure the quality of a coset coding scheme, we need the following two bounds. The first is given in \cite[Th. 12]{silva-universal} and can be seen as a $ q $-analog of the bound in \cite[Prop. 1]{efficient}, although considering also errors and not only erasures:

\begin{proposition}[\textbf{\cite{silva-universal}}] \label{proposition bound on info rate}
If the coset coding scheme $ F : \mathbb{F}_q^{\alpha m \times \ell} \longrightarrow \mathbb{F}_q^{\alpha m \times n} $ is universally $ t $-error and $ \rho $-erasure-correcting, and universally secure under $ \mu $ observations, then
\begin{equation}
\ell \leq n- 2t - \rho - \mu. \label{eq lower bound info rate}
\end{equation}
\end{proposition}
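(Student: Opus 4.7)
The plan is to derive the bound by combining a combinatorial consequence of reliability—a minimum rank distance between distinct cosets of the scheme—with a standard information-theoretic chain that exploits the security assumption. This is the rank-metric, error-and-security analog of the argument in \cite[Prop.~1]{efficient}.

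\textbf{Step 1} (rank distance between cosets). I would first show that for any $ X_1 \in \mathcal{C}_{S_1} $ and $ X_2 \in \mathcal{C}_{S_2} $ with $ S_1 \neq S_2 $, the matrix $ M = X_1 - X_2 $ satisfies $ {\rm Rk}_q(M) \geq 2t + \rho + 1 $. Arguing by contradiction, suppose $ r = {\rm Rk}_q(M) \leq 2t + \rho $. Since $ \dim \ker M = n - r \geq n - \rho - 2t $, one can pick an $ (n-\rho) $-dimensional subspace $ V \subseteq \mathbb{F}_q^n $ whose intersection with $ \ker M $ is as large as possible, yielding $ \dim M(V) = \max(0, r - \rho) \leq 2t $. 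Taking $ A \in \mathbb{F}_q^{(n-\rho) \times n} $ to be the matrix whose rows form a basis of $ V $ then gives $ {\rm Rk}_q(M A^T) \leq 2t $. Any matrix of $ \mathbb{F}_q $-rank at most $ 2t $ decomposes as $ E_2 - E_1 $ with $ {\rm Rk}_q(E_i) \leq t $ (split the rank factorization in half), so $ X_1 A^T + E_1 = X_2 A^T + E_2 $ yields one received word compatible with both $ S_1 $ and $ S_2 $ under a valid $ t $-error and $ \rho $-erasure attack, contradicting Definition \ref{def universal schemes}.

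\textbf{Step 2} (information-theoretic chain). Assume $ \mu \leq n - \rho - 2t $, otherwise the bound is trivial (if $ \ell > 0 $ the existence of the scheme already forces this inequality). Fix $ A' \in \mathbb{F}_q^{(n-\rho-2t) \times n} $ of rank $ n - \rho - 2t $ projecting onto the last $ n - \rho - 2t $ columns of $ X $, and pick $ B \in \mathbb{F}_q^{\mu \times n} $ of rank $ \mu $ projecting onto any $ \mu $ of those same columns, so that $ W = X B^T $ is literally a sub-matrix of $ Y = X {A'}^T $. Deleting $ \rho + 2t $ columns drops rank by at most $ \rho + 2t $, so Step~1 gives $ {\rm Rk}_q((X_1 - X_2){A'}^T) \geq 1 $ whenever $ S_1 \neq S_2 $: distinct cosets have disjoint images under $ A' $, hence $ H(S \mid Y) = 0 $. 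Combining with security,
\begin{equation*}
\alpha m \ell = H(S) = H(S \mid W) = I(S; Y \mid W) + H(S \mid Y, W) \leq H(Y \mid W),
\end{equation*}
where $ H(S \mid W) = H(S) $ uses $ I(S;W) = 0 $ and $ H(S \mid Y, W) \leq H(S \mid Y) = 0 $. Since $ W $ is a sub-matrix of $ Y $, $ H(Y \mid W) $ is the entropy of the remaining $ n - \rho - 2t - \mu $ columns of $ Y $, bounded above by $ \alpha m (n - \rho - 2t - \mu) $. Dividing by $ \alpha m $ yields \eqref{eq lower bound info rate}.

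The main obstacle is Step~1: the erasure matrix $ A $ is not simply a coordinate projection but has to be chosen as a projection onto an abstract $ (n-\rho) $-dimensional subspace of $ \mathbb{F}_q^n $ related to $ \ker(X_1 - X_2) $, and the rank-at-most-$ 2t $ matrix $ M A^T $ then has to be split into two rank-$ t $ error matrices. This is exactly where the factor $ 2t $ rather than $ t $ enters the bound, reflecting the standard fact that $ t $ rank errors are as damaging as $ 2t $ rank erasures. Once Step~1 is in hand, Step~2 is a routine $ q $-analog of the argument in \cite{efficient}.
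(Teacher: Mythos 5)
Your proof is correct. Note that the paper itself offers no proof of this proposition: it is quoted directly from \cite[Th.~12]{silva-universal}, so there is no internal argument to compare against. Your two-step derivation is a valid self-contained reconstruction along the standard lines: Step~1 establishes the Singleton-type separation $ {\rm Rk}_q(X_1 - X_2) \geq 2t + \rho + 1 $ between distinct cosets (the choice of $ V $ relative to $ \ker(X_1-X_2) $ and the splitting of a rank-$ \leq 2t $ matrix into a difference of two rank-$ \leq t $ matrices are both sound), and Step~2 is the correct $ q $-analog of the entropy chain in \cite[Prop.~1]{efficient}, using $ H(S \mid Y) = 0 $ from the distance bound and $ H(S \mid W) = H(S) $ from security. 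The only point you dispatch too quickly is the degenerate case $ \mu > n - \rho - 2t $: there the claimed bound forces $ \ell \leq 0 $, so one must actually rule out such a scheme with $ \ell \geq 1 $; this follows from the same chain by taking $ B $ of rank $ \mu $ whose row space contains that of $ A' $, so that $ W $ determines $ Y $ and hence $ S $, giving $ H(S) = H(S \mid W) = 0 $, a contradiction. With that one sentence added, the argument is complete.
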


Next we give a $ q $-analog of the bound in \cite[Th. 1]{efficient}, again adding the effect of errors, which was not considered in \cite{efficient}:

\begin{proposition} \label{lemma lower bounds}
If the coset coding scheme $ F : \mathbb{F}_q^{\alpha m \times \ell} \longrightarrow \mathbb{F}_q^{\alpha m \times n} $ is universally secure under $ \mu $ observations, then for a full-rank matrix $ A \in \mathbb{F}_q^{d \times n} $ and preprocessing functions $ E_{A,i} : \mathbb{F}_q^{\alpha m} \longrightarrow \mathbb{F}_q^{\beta_i m} $, for $ i = 1,2, \ldots, d $, that are $ t $-error-correcting with respect to $ F $, it holds that:
\begin{equation}
{\rm CO}(A) \geq \frac{\ell (2t + \mu)}{d - 2t - \mu}, \label{eq lower bound CO}
\end{equation}
\end{proposition}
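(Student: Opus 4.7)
The plan is to work with the error-free preprocessed vector $Z^0 = (Z_1^0, \ldots, Z_d^0)$, where $Z_i^0 := E_{A,i}(C\mathbf{a}_i^T)$, and to combine two structural properties via a standard entropy chain argument followed by an averaging over subsets of $[d]$. Property (A): for every $T_0 \subseteq [d]$ with $|T_0| = 2t$, $S$ is determined by $(Z_i^0)_{i \notin T_0}$, i.e.\ $H(S \mid Z^0_{[d]\setminus T_0}) = 0$. Property (B): for every $V \subseteq [d]$ with $|V| = \mu$, $I(S; Z^0_V) = 0$. Property (B) is the easy one: $(Z_i^0)_{i \in V}$ is a deterministic function of $CB^T$, where $B \in \mathbb{F}_q^{\mu \times n}$ stacks the rows $\mathbf{a}_i$, $i \in V$, so universal $\mu$-security together with the data processing inequality yields $I(S; Z^0_V) \leq I(S; CB^T) = 0$.

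The main obstacle is Property (A): this is where the factor $2t$ (and not $t$) appears and is really the only non-routine step. I would prove it by a minimum-distance-style contradiction. Suppose $C \in \mathcal{C}_S$ and $C' \in \mathcal{C}_{S'}$ with $S \neq S'$ have error-free vectors $Z^0$ and $Z^{0\prime}$ that coincide on $[d] \setminus T_0$. Split $T_0 = T_1 \sqcup T_2$ with $|T_1| = |T_2| = t$, and build a common vector $\widetilde Z$ by setting $\widetilde Z_i = Z_i^{0\prime}$ for $i \in T_1$, $\widetilde Z_i = Z_i^0$ for $i \in T_2$, and $\widetilde Z_i = Z_i^0 = Z_i^{0\prime}$ for $i \notin T_0$. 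Since $C\mathbf{a}_i^T + \mathbf{e}_i$ ranges over all of $\mathbb{F}_q^{\alpha m}$ as $\mathbf{e}_i$ varies, one can realize $\widetilde Z_i = E_{A,i}(C\mathbf{a}_i^T + \mathbf{e}_i)$ with $\mathbf{e}_i$ supported on $T_1$, so the corresponding error matrix has rank at most $t$; symmetrically, $\widetilde Z$ also arises from $C'$ with an error matrix supported on $T_2$. The $t$-error-correcting hypothesis then forces $D_A(\widetilde Z)$ to equal both $S$ and $S'$, a contradiction.

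Given (A) and (B), for any $U = V \sqcup T_0 \subseteq [d]$ with $|V| = \mu$ and $|T_0| = 2t$ we have $V \cup ([d]\setminus U) = [d]\setminus T_0$, so the chain rule combined with (A) and (B) gives
\begin{equation*}
H(S) \;=\; H(S \mid Z^0_V) \;=\; I(S;\, Z^0_{[d]\setminus U} \mid Z^0_V) \;\leq\; H(Z^0_{[d]\setminus U}) \;\leq\; \sum_{i \notin U} \beta_i m.
\end{equation*}
Since $H(S) = \alpha m \ell$, cancelling $m$ yields $\sum_{i \notin U} \beta_i \geq \alpha \ell$ for every $(2t+\mu)$-subset $U$ of $[d]$. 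Summing this inequality over all $\binom{d}{2t+\mu}$ such subsets and observing that each index $i \in [d]$ is excluded from exactly $\binom{d-1}{2t+\mu}$ of them yields $\sum_{i=1}^d \beta_i \geq \tfrac{d}{d-2t-\mu}\,\alpha \ell$. Dividing by $\alpha$ and subtracting $\ell$ produces the claimed lower bound ${\rm CO}(A) \geq \ell(2t+\mu)/(d-2t-\mu)$.
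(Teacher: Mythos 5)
Your proof is correct, and its one genuinely non-routine step is the same as the paper's: the observation that $t$-error correction of the preprocessing functions forces $0$-error (erasure-only) recoverability after discarding $2t$ coordinates, proved by the symmetric splitting $T_0 = T_1 \sqcup T_2$ and the fact that an error matrix with at most $t$ nonzero columns has rank at most $t$. The paper proves this only for one specific $2t$-set (having first sorted so that $\beta_1 \leq \cdots \leq \beta_d$), whereas you prove it for every $2t$-subset; both follow from the identical argument. Where you diverge is in the entropy-counting finish: the paper reduces to a secret sharing scheme on the $d-2t$ smallest-bandwidth coordinates and then cites the proof of Huang--Langberg--Kliewer--Bruck (its Equations (3) and (4)), handling the remaining $2t$ coordinates via the monotonicity of the $\beta_i$; you instead make the whole argument self-contained by deriving $\sum_{i \notin U} \beta_i \geq \alpha \ell$ for every $(2t+\mu)$-subset $U$ from the chain $H(S) = H(S \mid Z^0_V) = I(S; Z^0_{[d]\setminus U} \mid Z^0_V) \leq \sum_{i \notin U} \beta_i m$, and then averaging over all $\binom{d}{2t+\mu}$ such subsets. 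Your route buys a fully explicit, citation-free proof (including an explicit verification of the privacy property via data processing from $CB^T$, which the paper leaves implicit in its appeal to the secret-sharing result) and avoids the WLOG reordering; the paper's route is shorter on the page and isolates the only new ingredient relative to the Hamming-metric secret-sharing bound. Both yield the same inequality ${\rm DB}(A) \geq \ell d/(d-2t-\mu)$.
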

\begin{proof}
We may assume without loss of generality that $ \beta_1 \leq \beta_2 \leq \ldots \leq \beta_d $ as in the proof of \cite[Th. 1]{efficient}. 

First, we prove that the preprocessing functions $ E_{A,i} : \mathbb{F}_q^{\alpha m} \longrightarrow \mathbb{F}_q^{\beta_i m} $, for $ i = 1,2, \ldots, d-2t $ are $ 0 $-error-correcting with respect to $ F : \mathbb{F}_q^{\alpha m \times \ell} \longrightarrow \mathbb{F}_q^{\alpha m \times n} $. If they were not, then there would exist $ C_1 \in \mathcal{C}_{S_1} $ and $ C_2 \in \mathcal{C}_{S_2} $, with $ S_1 \neq S_2 $, such that
$$ \left( E_{A,i} \left( C_1 \mathbf{a}_i^T \right) \right)_{i=1}^{d-2t} = \left( E_{A,i} \left( C_2 \mathbf{a}_i^T \right) \right)_{i=1}^{d-2t}. $$
On the other hand, there exist $ \mathbf{e}_i \in \mathbb{F}_q^{\alpha m} $ such that $ C_1 \mathbf{a}_i^T + \mathbf{e}_i = C_2 \mathbf{a}_i^T $, for $ i = d-2t+1, d-2t+2, \ldots, d-t $, and $ C_1 \mathbf{a}_i^T = C_2 \mathbf{a}_i^T + \mathbf{e}_i $, for $ i = d-t+1, d-t+2, \ldots, d $. Thus we see that the preprocessing functions $ E_{A,i} : \mathbb{F}_q^{\alpha m} \longrightarrow \mathbb{F}_q^{\beta_i m} $, for $ i = 1,2, \ldots, d $, cannot be $ t $-error-correcting with respect to $ F $, which is a contradiction.

Next, defining $ f_i = E_{A,i} \left( C \mathbf{a}_i^T \right) $, where $ C = F(S) $, for $ i = 1,2, \ldots, d-2t $ and $ S \in \mathbb{F}_q^{\alpha m \times \ell} $, we may prove exactly as in the proof of \cite[Th. 1]{efficient} that
\begin{equation}
 \sum_{i = 1}^{d-2t} \frac{\beta_i}{\alpha} \geq \frac{\ell (d - 2t)}{d - 2t - \mu},
\label{eq proof of lower bound 1}
\end{equation}
and also
\begin{equation}
 \frac{\beta_{d-2t - \mu}}{\alpha} \geq \frac{\ell}{d - 2t - \mu}.
\label{eq proof of lower bound 2}
\end{equation}
Now using that $ \beta_{d-2t-\mu} \leq \beta_{d-2t-\mu+1} \leq \ldots \leq \beta_d $, and combining Equations (\ref{eq proof of lower bound 1}) and (\ref{eq proof of lower bound 2}), we conclude that
$$ {\rm DB}(A) = \left( \sum_{i = 1}^{d-2t} \frac{\beta_i}{\alpha} \right) + \left( \sum_{i = d-2t +1}^{d} \frac{\beta_i}{\alpha} \right) $$
$$ \geq \frac{\ell (d - 2t)}{d - 2t - \mu} + 2t \frac{\ell}{d - 2t - \mu} = \frac{\ell d}{d - 2t - \mu}, $$
and the bound on $ {\rm CO}(A) $ follows by substracting $ \ell $ to this inequality.
\end{proof}

\section{A general construction based on linear rank-metric codes} \label{sec general construction}

In this section, given a nested coset coding scheme (Definition \ref{def nested schemes}) able to correct $ t_0 $ errors and $ \rho_0 $ erasures, for fixed positive integers $ t_0 $ and $ \rho_0 $, and given an arbitrary set $ \mathcal{D} \subseteq [n-\rho_0, n] $ such that $ n-\rho_0 \in \mathcal{D} $ (in particular for $ \mathcal{D} = [n-\rho_0, n] $), we construct a coset coding scheme able to correct $ t_0 $ errors and any $ n-d $ erasures with lower communication overheads than the original scheme, for all $ d \in \mathcal{D} $. Moreover, both the original scheme and the modified one are universally secure under the same number of observations. The downside of the method is multiplying the packet length of the original coset coding scheme by a parameter $ \alpha $, depending on the involved codes, to achieve the desired subpacketization. The main result of the section is the following:

\begin{theorem} \label{theorem general univeral staircase properties}
Take $ \mathbb{F}_{q^m} $-linear codes $ \mathcal{C}_2 \subsetneqq \mathcal{C}_1 \subseteq \mathbb{F}_{q^m}^n $, a positive integer $ \rho_0 $ such that $ \rho_0 < d_R(\mathcal{C}_1, \mathcal{C}_2) $, and choose any subset $ \mathcal{D} \subseteq [n-\rho_0, n] $ such that $ n-\rho_0 \in \mathcal{D} $. Denote $ \mathcal{D} = \{ d_1, d_2, \ldots , d_h \} $, where $ d_h = n-\rho_0 < d_{h-1} < \ldots < d_2 < d_1 $, and assume that there exists a sequence of nested $ \mathbb{F}_{q^m} $-linear codes
$$ \mathcal{C}_1 = \mathcal{C}^{(h)} \subsetneqq \mathcal{C}^{(h-1)} \subsetneqq \ldots \subsetneqq \mathcal{C}^{(2)} \subsetneqq \mathcal{C}^{(1)} \subseteq \mathbb{F}_{q^m}^n $$
such that 
$$ d_R \left( \mathcal{C}^{(j)}, \mathcal{C}_2 \right) \geq n - d_j + 1, $$
for $ j = 1,2, \ldots, h $. Define then $ k_1 = \dim(\mathcal{C}_1) $, $ k_2 = \dim(\mathcal{C}_2) $, $ \ell = k_1 - k_2 $, $ \alpha_j = k^{(j)} - k_2 $ for $ j = 1,2, \ldots, h $, and 
$$ \alpha = {\rm LCM}(\alpha_1, \alpha_2, \ldots, \alpha_h). $$

There exists a coset coding scheme $ F : \mathbb{F}_{q^m}^{\alpha \times \ell} \longrightarrow \mathbb{F}_{q^m}^{\alpha \times n} $ that is universally $ t $-error and $ \rho $-erasure-correcting if $ 2t + \rho < d_R \left( \mathcal{C}_1, \mathcal{C}_2 \right) $, and is universally secure under $ \mu $ observations if $ \mu < d_R \left( \mathcal{C}_2^\perp, \mathcal{C}^{(1) \perp} \right) $.

In addition, for any $ d \in \mathcal{D} $ and any full-rank matrix $ A \in \mathbb{F}_q^{d \times n} $, there exist preprocessing functions $ E_{A,i} : \mathbb{F}_q^{\alpha m} \longrightarrow \mathbb{F}_q^{\beta_i m} $, for $ i = 1,2, \ldots, d $, which are $ t $-error-correcting with respect to $ F $, whenever $ 2t < d_R \left( \mathcal{C}^{(j)}, \mathcal{C}_2 \right) - n + d $, and such that
$$ {\rm CO}(A) = \frac{\ell \left( d - k^{(j)} + k_2 \right)}{k^{(j)} - k_2}, $$
where $ k^{(j)} = \dim \left( \mathcal{C}^{(j)} \right) $, for $ j $ such that $ d = d_j $.
\end{theorem}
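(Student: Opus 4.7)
My plan is to adapt the staircase secret-sharing paradigm of Bitar--El Rouayheb (and Huang--Langberg--Kliewer--Bruck) to the rank-metric setting, realizing $F$ as a layered coset coding scheme that simultaneously uses every code in the nested family $\mathcal{C}_2 \subsetneq \mathcal{C}^{(h)} \subsetneq \cdots \subsetneq \mathcal{C}^{(1)}$. The default reliability and security will reduce to Lemma \ref{lemma correction capability} applied to the pairs $(\mathcal{C}_1, \mathcal{C}_2)$ and $(\mathcal{C}^{(1)}, \mathcal{C}_2)$ respectively, while the bandwidth savings will come from preprocessings that keep only a top slice of each column sized to match the relevant level of the staircase.

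For the encoder, I would fix a flag of $\mathbb{F}_{q^m}$-complements $T^{(h)} \subset T^{(h-1)} \subset \cdots \subset T^{(1)}$ with $\dim T^{(j)} = \alpha_j$ and $\mathcal{C}^{(j)} = \mathcal{C}_2 \oplus T^{(j)}$, together with compatible isomorphisms $\varphi^{(j)} : \mathbb{F}_{q^m}^{\alpha_j} \to T^{(j)}$. The codeword $C \in \mathbb{F}_{q^m}^{\alpha \times n}$ is then assembled row-by-row: each row is a uniform element of $\mathcal{C}^{(1)}$ whose coset modulo $\mathcal{C}_2$ is a linear function of $S$ and of auxiliary randomness, arranged so that when restricted to any top block of $\alpha/\alpha_j$ rows the cosets modulo $\mathcal{C}_2$ lie in $T^{(j)}$ and carry exactly $\ell \alpha/\alpha_j$ message symbols, with the higher components $T^{(j')} \setminus T^{(j)}$ for $j' < j$ filled by fresh uniform randomness. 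The divisibility $\alpha_j \mid \alpha$ provided by $\alpha = \text{LCM}(\alpha_1, \ldots, \alpha_h)$ is precisely what makes all these block sizes integer and the dimensions add up to exactly $\alpha \ell$ secret symbols.

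Universal reliability and security then follow from Lemma \ref{lemma correction capability}: when all available columns are used, the decoder \emph{peels} the staircase layer-by-layer, using each decoded layer to subtract off the auxiliary randomness contaminating the next, until it reduces to row-by-row decoding of an $(\mathcal{C}_1, \mathcal{C}_2)$-scheme, giving correction of any $2t + \rho < d_R(\mathcal{C}_1, \mathcal{C}_2)$. Meanwhile, any $\mu < d_R(\mathcal{C}_2^\perp, \mathcal{C}^{(1)\perp})$ columns seen by an eavesdropper give only a uniform coset of $\mathcal{C}_2$ in $\mathcal{C}^{(1)}$ projected onto $\mu$ coordinates, which by Lemma \ref{lemma correction capability} is statistically independent of $S$. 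The identification $\varphi_n$ of Definition \ref{def phi map} lets both statements be translated directly to the rank over $\mathbb{F}_q$.

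For the overhead at $d = d_j \in \mathcal{D}$, I would define $E_{A,i}$ to be the $\mathbb{F}_q$-linear map that keeps the first $\beta_i := \ell \alpha / \alpha_j$ entries of the $i$-th column (in the $\mathbb{F}_{q^m}$-view), which is a well-defined integer because $\alpha_j \mid \alpha$. The staircase is designed so that these $\beta_i$ retained rows, stacked across the $d$ contacted columns, form a codeword of a nested scheme for $(\mathcal{C}^{(j)}, \mathcal{C}_2)$ evaluated at the columns indexed by $A$; the hypothesis $d_R(\mathcal{C}^{(j)}, \mathcal{C}_2) \geq n - d + 1$ together with the slack $2t < d_R(\mathcal{C}^{(j)}, \mathcal{C}_2) - n + d$ let a standard rank-metric decoder recover $S$, and the identity $\sum_i \beta_i/\alpha = \ell d / \alpha_j$ yields $\text{CO}(A) = \ell(d - \alpha_j)/\alpha_j$ as claimed. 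The main obstacle will be specifying the staircase so that the top-$\beta_i$ projection simultaneously cancels the outer-layer randomness and leaves a well-defined $(\mathcal{C}^{(j)}, \mathcal{C}_2)$-codeword; this balancing is the heart of the construction and is made possible precisely by the compatibility of the filtration $T^{(h)} \subset \cdots \subset T^{(1)}$ with the row partition built from the divisibilities $\alpha_j \mid \alpha$.
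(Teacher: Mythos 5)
Your high-level plan matches the paper's (a rank-metric staircase in the style of Bitar--El Rouayheb and Huang et al., with top-slice preprocessing, a peeling decoder, and the LCM ensuring integrality; your overhead arithmetic $\sum_i\beta_i/\alpha-\ell=\ell(d-\alpha_j)/\alpha_j$ is also correct). However, the construction you sketch is not the right one, and you explicitly defer the one point on which the whole theorem hinges. You state that in the top block of $\ell\alpha/\alpha_j$ rows the cosets modulo $\mathcal{C}_2$ lie in the \emph{smaller} complement $T^{(j)}$, with the remaining components $T^{(1)}\setminus T^{(j)}$ ``filled by fresh uniform randomness.'' This is both internally inconsistent (for $j=h$ it forces every row into $\mathcal{C}_1$, leaving no room for the higher components you then populate) and, more importantly, fatally wrong for the bandwidth claim: if the components of the top rows outside $T^{(j)}$ were fresh randomness, then discarding the bottom $\alpha-\ell\alpha/\alpha_j$ rows would irretrievably discard the message symbols those rows carry. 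The correct structure, given in Subsection \ref{subsec description general}, is the opposite: the \emph{bottom} row-blocks live in the smallest code $\mathcal{C}_1=\mathcal{C}^{(h)}$ and the top ones in progressively larger codes, and the extra coset components of the upper rows are \emph{not} randomness but the deterministic blocks $D_{u,v}$, which are rearrangements of the secret-plus-$D$ data of the lower row-blocks. It is exactly this redundant re-embedding --- made dimensionally consistent by the identity $p_{j+1}\alpha_{j+1}=(\alpha_j-\alpha_{j+1})\ell\alpha/\alpha_j$ --- that lets the top $\ell\alpha/\alpha_j$ rows determine all of $S$ via an iterative peeling with the pair $(\mathcal{C}^{(j)},\mathcal{C}_2)$. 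You acknowledge this balancing as ``the main obstacle'' and ``the heart of the construction'' without supplying it, so the proof is incomplete precisely where the theorem is nontrivial.

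A secondary gap: because the $D_{u,v}$ blocks (or, in your version, the non-fresh components) are deterministic functions of $S$, the row-wise cosets are not uniformly distributed, so Lemma \ref{lemma correction capability} does not apply directly to give security. The paper instead bounds $H(W)\leq\alpha m\dim(\mathcal{C}^{(1)}B^T)$ and computes $H(W\mid S)=\alpha m\dim(\mathcal{C}_2B^T)$ (using that the $\mathcal{C}_2$-part $R$ is uniform and independent of $S$), and concludes $I(S;W)=0$ from the equality $\mathcal{C}_2B^T=\mathcal{C}^{(1)}B^T$ of Lemma \ref{lemma secrecy}, which requires its own proof via duality. You would need to make this entropy argument explicit rather than cite the iff-criterion for plain nested schemes.
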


\subsection{Description of the construction for Theorem \ref{theorem general univeral staircase properties}} \label{subsec description general}

Let the notation be as in Theorem \ref{theorem general univeral staircase properties} and take a generator matrix $ G_2 \in \mathbb{F}_{q^m}^{k_2 \times n} $ of $ \mathcal{C}_2 $ and a generator matrix $ G_1 \in \mathbb{F}_{q^m}^{k_1 \times n} $ of $ \mathcal{C}_1 $ of the form
\begin{displaymath}
G_1 = \left(
\begin{array}{c}
G_2 \\
G_c
\end{array} \right) \in \mathbb{F}_{q^m}^{k_1 \times n},
\end{displaymath}
for some matrix $ G_c \in \mathbb{F}_{q^m}^{\ell \times n} $. Decreasingly in $ j = h-1,h-2, \ldots, 2,1 $, take a generator matrix $ G^{(j)} \in \mathbb{F}_{q^m}^{k^{(j)} \times n} $ of $ \mathcal{C}^{(j)} $ of the form
\begin{displaymath}
G^{(j)} = \left(
\begin{array}{c}
G^{(j+1)} \\
G^{(j+1)}_c
\end{array} \right) \in \mathbb{F}_{q^m}^{k^{(j)} \times n},
\end{displaymath} 
for some matrix $ G_c^{(j+1)} \in \mathbb{F}_{q^m}^{\left( k^{(j)} - k^{(j+1)} \right) \times n} $. Next define the following positive integers, which are analogous to the integers defined in \cite[Eq. (11)]{efficient}:
\begin{displaymath}
p_j = \left\lbrace \begin{array}{ll}
\frac{\ell \alpha}{\alpha_1} & \textrm{if } j = 1, \\
\frac{\ell \alpha}{\alpha_j} - \frac{\ell \alpha}{\alpha_{j-1}} & \textrm{if } 1 < j \leq h.
\end{array} \right.
\end{displaymath}

Let $ S \in \mathbb{F}_{q^m}^{\alpha \times \ell} $ be the secret message and generate uniformly at random a matrix $ R \in \mathbb{F}_{q^m}^{\alpha \times k_2} $. Divide $ S $ and $ R $ as follows:
\begin{displaymath}
S = \left(
\begin{array}{c}
S_1 \\
S_2 \\
\vdots \\
S_h 
\end{array} \right), \quad R = \left(
\begin{array}{c}
R_1 \\
R_2 \\
\vdots \\
R_h 
\end{array} \right),
\end{displaymath}
where
$$ S_j \in \mathbb{F}_{q^m}^{p_j \times \ell} \textrm{ and } R_j \in \mathbb{F}_{q^m}^{p_j \times k_2}, $$
for $ j = 1,2, \ldots, h $. Next, we define the matrices
\begin{displaymath}
\begin{array}{cccccccc}
M_1 & = & (R_1 & S_1 & D_{1,1} & D_{1,2} & \ldots & D_{1,h-1} ), \\
M_2 & = & (R_2 & S_2 & D_{2,1} & D_{2,2} & \ldots & 0 ), \\
M_3 & = & (R_3 & S_3 & D_{3,1} & D_{3,2} & \ldots & 0 ), \\
\vdots & & & \vdots & & & & \vdots \\
M_{h-1} & = & (R_{h-1} & S_{h-1} & D_{h-1,1} & 0 & \ldots & 0 ), \\
M_h & = & (R_h & S_h & 0 & 0 & \ldots & 0 ), \\
\end{array} 
\end{displaymath}
where $ M_u \in \mathbb{F}_{q^m}^{p_u \times k^{(1)}} $, and where the matrices $ D_{u, v} \in \mathbb{F}_{q^m}^{p_u \times (\alpha_{h-v} - \alpha_{h-v+1})} $ are defined iteratively as follows: For $ v = 1,2, \ldots, h-1 $, the components of the $ v $-th column block
\begin{displaymath}
\left( \begin{array}{c}
D_{1, v} \\
D_{2, v} \\
\vdots \\
D_{h-v, v}
\end{array} \right) \in \mathbb{F}_{q^m}^{\ell \alpha / \alpha_{h-v} \times (\alpha_{h-v} - \alpha_{h-v+1})},
\end{displaymath}
are the components (after some fixed rearrangement) of 
$$ (S_{h-v+1} | D_{h-v+1, 1} | D_{h-v+1, 2} | \ldots | D_{h-v+1, v-1}), $$
whose size is $ p_{h-v+1} \times \alpha_{h-v+1} $ (observe that $ p_{j+1} \alpha_{j+1} = (\alpha_j - \alpha_{j+1}) \ell \alpha / \alpha_j $). For convenience, we define the matrices
\begin{equation}
M^\prime_j = \left(
\begin{array}{c}
M_1 \\
M_2 \\
\vdots \\
M_j  
\end{array} \right) \in \mathbb{F}_{q^m}^{\ell \alpha / \alpha_j \times k^{(1)}}, \label{eq definition M_h}
\end{equation}
for $ j = 1,2, \ldots, h $. 

Finally, we define the coset coding scheme $ F : \mathbb{F}_{q^m}^{\alpha \times \ell} \longrightarrow \mathbb{F}_{q^m}^{\alpha \times n} $ by
\begin{equation}
C = F(S) = M^\prime_h G^{(1)} \in \mathbb{F}_{q^m}^{\alpha \times n}.
\label{eq construction definition}
\end{equation}

To conclude, we define $ E_{A,i} : \mathbb{F}_{q^m}^\alpha \longrightarrow \mathbb{F}_{q^m}^{\ell \alpha / \alpha_j} $ as follows. For $ j = 1,2, \ldots, h $, for $ i = 1,2, \ldots, d_j $, and for a full-rank matrix $ A \in \mathbb{F}_q^{d_j \times n} $, we define $ E_{A,i}(\mathbf{y}_i) \in \mathbb{F}_{q^m}^{\ell \alpha / \alpha_j} $ by restricting $ \mathbf{y}_i \in \mathbb{F}_{q^m}^\alpha $ to its first $ \ell \alpha / \alpha_j $ rows.

\subsection{Proof of Theorem \ref{theorem general univeral staircase properties}} \label{subsec proof general}

Let the notation be as in Theorem \ref{theorem general univeral staircase properties} and as in the previous subsection. We prove each statement in Theorem \ref{theorem general univeral staircase properties} separately:

\textit{1) The coset coding scheme is universally $ t $-error and $ \rho $-erasure-correcting if $ 2t + \rho < d_R(\mathcal{C}_1, \mathcal{C}_2) $:} Take $ A \in \mathbb{F}_q^{N \times n} $ of rank at least $ n-\rho $ and an error matrix $ E \in \mathbb{F}_{q^m}^{\alpha \times N} $ such that $ {\rm Rk}_q(E) \leq t $. Divide $ E $ in the same way as $ S $ and $ R $, that is,
\begin{displaymath}
E = \left(
\begin{array}{c}
E_1 \\
E_2 \\
\vdots \\
E_h 
\end{array} \right) \in \mathbb{F}_{q^m}^{\alpha \times N},
\end{displaymath}
where $ E_j \in \mathbb{F}_{q^m}^{p_j \times N} $, and observe that $ {\rm Rk}_q(E_j) \leq {\rm Rk}_q(E) \leq t $, for $ j = 1,2, \ldots, h $. From 
$$ (S_h | R_h | 0) G^{(1)} A^T + E_h = (S_h | R_h) G_1 A^T + E_h $$ 
we obtain $ S_h $ by Lemma \ref{lemma correction capability}, since $ {\rm Rk}_q(E_h) \leq t $ and $ 2t+\rho < d_R(\mathcal{C}_1, \mathcal{C}_2) $.  By definition, we have obtained $ D_{u,1} $, for $ u = 1,2, \ldots h-1 $. Hence substracting $ D_{h-1,1} G_c^{(h)} A^T $ from $ (S_{h-1} | R_{h-1} | D_{h-1,1} | $ $ 0)G^{(1)} A^T + E_{h-1} $, we may obtain $ (S_{h-1} | R_{h-1} ) G_1 A^T + E_{h-1} $, and thus we obtain $ S_{h-1} $ again by Lemma \ref{lemma correction capability}. Now, we have also obtained $ D_{u,2} $, for $ u = 1,2, \ldots, h-2 $. Proceeding iteratively in the same way, we see that we may obtain all the matrices $ S_j $, for $ j = 1,2, \ldots, h $, and thus we obtain the whole message $ S $.

\textit{2) The coset coding scheme is universally secure under any $ \mu < d_R \left( \mathcal{C}_2^\perp, \mathcal{C}_1^\perp \right) $ observations:} We first need the following preliminary lemma, which follows from \cite[Th. 3]{similarities}: 

\begin{lemma} \label{lemma secrecy}
Let $ B \in \mathbb{F}_q^{\mu \times n} $ and let $ \mathcal{C}_2 \subsetneqq \mathcal{C}_1 \subseteq \mathbb{F}_{q^m}^n $ be $ \mathbb{F}_{q^m} $-linear codes. If $ {\rm Rk}(B) < d_R \left( \mathcal{C}_2^\perp, \mathcal{C}_1^\perp \right) $, then
$$ \mathcal{C}_2 B^T = \mathcal{C}_1 B^T, $$
where $ \mathcal{C} B^T = \left\lbrace \mathbf{c} B^T \mid \mathbf{c} \in \mathcal{C} \right\rbrace \subseteq \mathbb{F}_{q^m}^\mu $, for a code $ \mathcal{C} \subseteq \mathbb{F}_{q^m}^n $.
\end{lemma}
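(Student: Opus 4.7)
The plan is to reduce the set equality $\mathcal{C}_2 B^T = \mathcal{C}_1 B^T$ to an equality of $\mathbb{F}_{q^m}$-dimensions, since the inclusion $\mathcal{C}_2 B^T \subseteq \mathcal{C}_1 B^T$ is immediate from $\mathcal{C}_2 \subseteq \mathcal{C}_1$. View $\pi_B : \mathbb{F}_{q^m}^n \to \mathbb{F}_{q^m}^\mu$, $\mathbf{c} \mapsto \mathbf{c} B^T$, as an $\mathbb{F}_{q^m}$-linear map and apply rank-nullity to its restriction to each $\mathcal{C}_i$:
\[
\dim_{\mathbb{F}_{q^m}}(\mathcal{C}_i B^T) = \dim(\mathcal{C}_i) - \dim(\mathcal{C}_i \cap \ker(\pi_B)).
\]
Letting $V \subseteq \mathbb{F}_q^n$ denote the $\mathbb{F}_q$-row space of $B$ and $\widetilde{V} \subseteq \mathbb{F}_{q^m}^n$ its $\mathbb{F}_{q^m}$-scalar extension, the kernel of $\pi_B$ coincides with the $\mathbb{F}_{q^m}$-orthogonal complement $\widetilde{V}^\perp$, and $\dim_{\mathbb{F}_{q^m}}(\widetilde{V}) = \dim_{\mathbb{F}_q}(V) = {\rm Rk}(B)$.

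The next step is a short dimension-chasing calculation. Combining the duality $(\mathcal{C}_i \cap \widetilde{V}^\perp)^\perp = \mathcal{C}_i^\perp + \widetilde{V}$, the modular law applied to $\dim(\mathcal{C}_i^\perp + \widetilde{V})$, and the standard identity $\dim(\mathcal{C}_1) - \dim(\mathcal{C}_2) = \dim(\mathcal{C}_2^\perp) - \dim(\mathcal{C}_1^\perp)$, the difference telescopes to
\[
\dim(\mathcal{C}_1 B^T) - \dim(\mathcal{C}_2 B^T) = \dim(\mathcal{C}_2^\perp \cap \widetilde{V}) - \dim(\mathcal{C}_1^\perp \cap \widetilde{V}).
\]
Since $\mathcal{C}_1^\perp \subseteq \mathcal{C}_2^\perp$, the target reduces to proving the reverse inclusion $\mathcal{C}_2^\perp \cap \widetilde{V} \subseteq \mathcal{C}_1^\perp \cap \widetilde{V}$, that is, that every $\mathbf{e} \in \mathcal{C}_2^\perp \cap \widetilde{V}$ already lies in $\mathcal{C}_1^\perp$.

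This is where the rank hypothesis enters, and it is the conceptual heart of the argument. I would observe that any $\mathbf{e} \in \widetilde{V}$ satisfies ${\rm Rk}_q(\mathbf{e}) \leq \dim_{\mathbb{F}_q}(V) = {\rm Rk}(B)$: expanding $\mathbf{e}$ against an $\mathbb{F}_q$-basis of $V$ with coefficients in $\mathbb{F}_{q^m}$ shows that the $\mathbb{F}_q$-span of the entries of $\mathbf{e}$ has dimension at most $\dim_{\mathbb{F}_q}(V)$, which matches the interpretation of ${\rm Rk}_q$ via Definition \ref{def phi map}. Consequently, if some $\mathbf{e} \in \mathcal{C}_2^\perp \cap \widetilde{V}$ lay outside $\mathcal{C}_1^\perp$, the definition of relative minimum rank distance would force ${\rm Rk}_q(\mathbf{e}) \geq d_R(\mathcal{C}_2^\perp, \mathcal{C}_1^\perp) > {\rm Rk}(B)$, contradicting the bound just obtained.

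The main obstacle is not conceptually deep: once the kernel description and the dual-code identity are in place, the entire argument collapses onto the single rank inequality ${\rm Rk}_q(\mathbf{e}) \leq \dim_{\mathbb{F}_q}(V)$ for $\mathbf{e} \in \widetilde{V}$. The principal bookkeeping risk is keeping the two layers of linearity ($\mathbb{F}_q$ versus $\mathbb{F}_{q^m}$) straight when invoking duality and the modular law, and remembering that the relevant $\perp$ in the hypothesis is taken in $\mathbb{F}_{q^m}^n$ while $V$ is defined as an $\mathbb{F}_q$-subspace.
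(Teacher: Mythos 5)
Your proof is correct and follows essentially the same route as the paper's Appendix~\ref{app proof of}: the rank--nullity computation for $ \mathbf{c} \mapsto \mathbf{c} B^T $ and the duality/dimension count giving $ \dim \left( \mathcal{C}_1 B^T \right) - \dim \left( \mathcal{C}_2 B^T \right) = \dim \left( \mathcal{C}_2^\perp \cap \widetilde{V} \right) - \dim \left( \mathcal{C}_1^\perp \cap \widetilde{V} \right) $ are exactly the paper's Equation (\ref{eq proof of lemma then equal}). The only difference is that where the paper concludes by citing the characterization of $ d_R $ from \cite{similarities}, you prove the needed inclusion $ \mathcal{C}_2^\perp \cap \widetilde{V} \subseteq \mathcal{C}_1^\perp $ directly from the elementary bound $ {\rm Rk}_q(\mathbf{e}) \leq {\rm Rk}(B) $ for $ \mathbf{e} $ in the $ \mathbb{F}_{q^m} $-row space of $ B $, which makes the argument self-contained.
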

\begin{proof}
See Appendix \ref{app proof of}.
\end{proof}

Take $ B \in \mathbb{F}_q^{\mu \times n} $, and assume that the eavesdropper obtains
$$ W = CB^T = M^\prime_h G^{(1)} B^T \in \mathbb{F}_{q^m}^{\alpha \times \mu}. $$
The random variable $ W $ has support inside the $ \mathbb{F}_{q^m} $-linear vector space
$$ \mathcal{C}^{(1)}_{B, \alpha} = \left\lbrace M G^{(1)} B^T \mid M \in \mathbb{F}_{q^m}^{\alpha \times k^{(1)}} \right\rbrace \subseteq \mathbb{F}_{q^m}^{\alpha \times \mu}. $$
Recall from \cite[Th. 2.6.4]{cover} that, if a random variable $ X $ has support in the set $ \mathcal{X} $, then $ H(X) \leq \log_q(\# \mathcal{X}) $. Hence
$$ H(W) \leq \log_{q} \left( \# \mathcal{C}^{(1)}_{B, \alpha} \right) = m \dim \left( \mathcal{C}^{(1)}_{B, \alpha} \right) = \alpha m \dim \left( \mathcal{C}^{(1)}B^T \right), $$
where dimensions are taken over $ \mathbb{F}_{q^m} $. On the other hand, using the analogous notation $ \mathcal{C}_{2 B, \alpha} $ for $ \mathcal{C}_2 $ instead of $ \mathcal{C}^{(1)} $, it holds that
$$ H(W \mid S) = \log_q \left( \# \mathcal{C}_{2 B, \alpha} \right) = m \dim \left( \mathcal{C}_{2 B, \alpha} \right) = \alpha m \dim \left( \mathcal{C}_2 B^T \right), $$
since, given a value of $ S $, the variable $ W $ is a uniform random variable over an $ \mathbb{F}_{q^m} $-linear affine space obtained by translating the vector space $ \mathcal{C}_{2 B, \alpha} $. Hence we obtain that 
$$ 0 \leq I(S; W) = H(W) - H(W \mid S) $$
$$ \leq \alpha m \left( \dim \left( \mathcal{C}^{(1)} B^T \right) - \dim \left( \mathcal{C}_2 B^T \right) \right) = 0, $$
where the last equality follows from Lemma \ref{lemma secrecy}. Thus $ I(S; W) = 0 $ and we are done.

\textit{3) The preprocessing functions are $ t $-error-correcting for any $ 2t < d_R \left( \mathcal{C}^{(j)}, \mathcal{C}_2 \right) - n + d $, where $ d = d_j $:} Fix $ d \in \mathcal{D} $ and a full-rank matrix $ A \in \mathbb{F}_q^{d \times n} $, and let $ E_{A,i} : \mathbb{F}_{q^m}^\alpha \longrightarrow \mathbb{F}_{q^m}^{\ell \alpha / \alpha_j} $ be preprocessing functions as in the previous subsection, for $ i = 1,2, \ldots, d $, and where $ j $ is such that $ d = d_j $. 

Let $ E \in \mathbb{F}_{q^m}^{\alpha \times d} $ be an error matrix such that $ {\rm Rk}_q(E) \leq t $, and let $ \mathbf{e}_1, \mathbf{e}_2, \ldots, \mathbf{e}_d \in \mathbb{F}_{q^m}^\alpha $ be its columns. By definition, $ E_{A,i} \left( C \mathbf{a}_i^T + \mathbf{e}_i \right) $ is the $ i $-th column of
$$ M_j^\prime G^{(1)} A^T + E_j^\prime \in \mathbb{F}_{q^m}^{\ell \alpha / \alpha_j \times d}, $$
for $ i = 1,2, \ldots, d $, and a submatrix $ E^\prime_j \in \mathbb{F}_{q^m}^{\ell \alpha / \alpha_j \times d} $ of $ E $, which thus satisfies that $ {\rm Rk}_q(E_j^\prime) \leq {\rm Rk}_q(E) \leq t $. Therefore, we may obtain the matrix $ M_j^\prime $ as in item 1, since $ 2t + n - d < d_R(\mathcal{C}^{(j)}, \mathcal{C}_2) $. By definition, the matrices $ S_1, S_2, \ldots, S_j $ are contained in $ M_j^\prime $. Moreover, the matrices $ D_{1,h-j}, D_{2,h-j}, \ldots, D_{j,h-j} $ are also contained in $ M_j^\prime $, and from them we obtain by definition $ S_{j+1} $ and $ D_{j+1,1}, D_{j+1,2}, \ldots, D_{j+1,h-j-1} $. Now, the matrices $ D_{1,h-j-1}, D_{2,h-j-1}, \ldots, D_{j,h-j-1} $ are contained in $ M_j^\prime $ and we also have $ D_{j+1,h-j-1} $, hence we may obtain by definition $ S_{j+2} $ and $ D_{j+2,1}, D_{j+2,2}, \ldots, D_{j+2,h-j-2} $. Continuing iteratively in this way, we may obtain all $ S_1, S_2, \ldots, S_h $ and hence the message $ S $. 

Finally, we have that
$$ {\rm CO}(A) = \sum_{i = 1}^d \frac{\beta_i}{\alpha} - \ell = \sum_{i = 1}^d \frac{\ell \alpha}{\alpha_j \alpha} - \ell $$
$$ = \frac{\ell d}{\alpha_j} - \ell = \frac{\ell (d - \alpha_j)}{\alpha_j} = \frac{\ell \left( d - k^{(j)} + k_2 \right)}{k^{(j)} - k_2}. $$

\section{MRD codes and coset coding schemes with optimal communication overheads} \label{sec optimal schemes}

In this section, we apply Theorem \ref{theorem general univeral staircase properties} to pairs of Gabidulin codes \cite{gabidulin, roth} and their cartesian products \cite{reducible}. The first family yields optimal coset coding schemes when $ n \leq m $ in the sense of (\ref{eq lower bound info rate}) and (\ref{eq lower bound CO}), and the second family constitutes a family of maximum rank distance (MRD) codes when $ n > m $ \cite[Cor. 1]{reducible}.

We recall the definition of MRD codes for convenience of the reader. The Singleton bound for an arbitrary (linear or not) code $ \mathcal{C} \subseteq \mathbb{F}_{q^m}^n $ was first given in \cite[Th. 6.3]{delsartebilinear}:
\begin{equation}
\# \mathcal{C} \leq q^{\max\{ m,n \} \left( \min \{ m,n \} - d_R(\mathcal{C}) + 1 \right)}.
\label{eq singleton bound general}
\end{equation}
We then say that $ \mathcal{C} $ is MRD if equality holds in (\ref{eq singleton bound general}). In another direction, a Singleton bound on the relative minimum rank distance of a pair of $ \mathbb{F}_{q^m} $-linear codes $ \mathcal{C}_2 \subsetneqq \mathcal{C}_1 \subseteq \mathbb{F}_{q^m}^n $ was first given in \cite[Prop. 3]{rgrw}:
\begin{equation}
d_R(\mathcal{C}_1, \mathcal{C}_2) \leq \min \left\lbrace n - \dim(\mathcal{C}_1), \frac{m (n - \dim(\mathcal{C}_1))}{n - \dim(\mathcal{C}_2)} \right\rbrace + 1.
\label{eq singleton bound relative}
\end{equation}
Thus if $ \mathcal{C}_1 $ is MRD and $ n \leq m $, then equality is satisfied in (\ref{eq singleton bound relative}).

\subsection{Coset coding schemes based on Gabidulin codes} \label{subsec using gabidulin}

In this subsection we will make use of Gabidulin codes, which were introduced independently in \cite[Sec. 4]{gabidulin} and \cite[Sec. III]{roth}. Throughout this subsection, we will assume that $ n \leq m $.

\begin{definition} [\textbf{\cite{gabidulin, roth}}] \label{def gabidulin code}
Fix a basis $ \gamma_1, \gamma_2, \ldots, \gamma_m $ of $ \mathbb{F}_{q^m} $ as a vector space over $ \mathbb{F}_q $, and let $ 0 \leq k \leq n $. The Gabidulin code of dimension $ k $ and length $ n $ over $ \mathbb{F}_{q^m} $, constructed from the previous basis, is the $ \mathbb{F}_{q^m} $-linear code $ \mathcal{G}_k \subseteq \mathbb{F}_{q^m}^n $ with parity-check matrix given by
\begin{displaymath}
\left( \begin{array}{ccccc}
\gamma_1 & \gamma_2 & \gamma_3 & \ldots & \gamma_n \\
\gamma_1^q & \gamma_2^q & \gamma_3^q & \ldots & \gamma_n^q \\
\gamma_1^{q^2} & \gamma_2^{q^2} & \gamma_3^{q^2} & \ldots & \gamma_n^{q^2} \\
\vdots & \vdots & \vdots & \ddots & \vdots \\
\gamma_1^{q^{n-k-1}} & \gamma_2^{q^{n-k-1}} & \gamma_3^{q^{n-k-1}} & \ldots & \gamma_n^{q^{n-k-1}} \\
\end{array} \right) \in \mathbb{F}_{q^m}^{(n-k) \times n}.
\end{displaymath}
\end{definition}
 
It was proven in \cite[Th. 6]{gabidulin} and \cite[Th. 2]{roth} that the code $ \mathcal{G}_k \subseteq \mathbb{F}_{q^m}^n $ satisfies
\begin{equation}
\dim \left( \mathcal{G}_k \right) = k, \quad \textrm{and} \quad d_R(\mathcal{G}_k) = n-k+1,
\label{eq properties gabidulin codes}
\end{equation}
constituting thus a family of MRD codes covering all parameters when $ n \leq m $. Moreover it is clear from the definition that, for a fixed basis of $ \mathbb{F}_{q^m} $ over $ \mathbb{F}_q $, they form a nested sequence of codes:
\begin{equation}
\{ \mathbf{0} \} = \mathcal{G}_0 \subsetneqq \mathcal{G}_1 \subsetneqq \mathcal{G}_2 \subsetneqq \ldots \subsetneqq \mathcal{G}_{n-1} \subsetneqq \mathcal{G}_n = \mathbb{F}_{q^m}^n.
\label{eq sequence of gabidulin}
\end{equation}
Thus the next theorem follows directly from Theorem \ref{theorem general univeral staircase properties}:

\begin{theorem} \label{theorem using gabidulin}
Choose integers $ k_2, k_1, t_0 $ and $ \rho_0 $ such that $ 0 \leq k_2 < k_1 \leq n $ and $ 2 t_0 + \rho_0 = n - k_1 $, and choose any subset $ \mathcal{D} \subseteq [n-\rho_0, n] $ such that $ n-\rho_0 \in \mathcal{D} $. 

Now, fix a basis of $ \mathbb{F}_{q^m} $ over $ \mathbb{F}_q $, let $ \mathcal{C}_2 \subsetneqq \mathcal{C}_1 \subseteq \mathbb{F}_{q^m}^n $ be $ \mathbb{F}_{q^m} $-linear Gabidulin codes of dimensions $ k_2 $ and $ k_1 $ (that is, $ \mathcal{G}_{k_2} $ and $ \mathcal{G}_{k_1} $), respectively, and denote the elements in $ \mathcal{D} $ by $ d_h = n-\rho_0 < d_{h-1} < \ldots < d_2 < d_1 $.

The coset coding scheme $ F : \mathbb{F}_{q^m}^{\alpha \times \ell} \longrightarrow \mathbb{F}_{q^m}^{\alpha \times n} $ in Theorem \ref{theorem general univeral staircase properties} based on this pair of codes and the subsequence of (\ref{eq sequence of gabidulin}) given by the Gabidulin codes $ \mathcal{C}^{(j)} = \mathcal{G}_{d_j - 2t_0} $, that is, $ k^{(j)} = d_j - 2 t_0 $, for $ j = 1,2, \ldots, h $, satisfies $ \ell = k_1-k_2 $, is universally $ t $-error and $ \rho $-erasure-correcting if $ 2t + \rho \leq n-k_1 $, and is universally secure under $ \mu $ observations if $ \mu \leq k_2 $. In particular, the scheme is optimal in the sense of (\ref{eq lower bound info rate}). Moreover, it holds that
$$ \alpha = {\rm LCM} \left( d_1 - 2t_0 - k_2, d_2 - 2t_0 - k_2, \ldots, d_h - 2t_0 - k_2 \right). $$

In addition, for any $ d \in \mathcal{D} $ and any full-rank matrix $ A \in \mathbb{F}_q^{d \times n} $, there exist preprocessing functions $ E_{A,i} : \mathbb{F}_q^{\alpha m} \longrightarrow \mathbb{F}_q^{ \ell \alpha m / (d - 2 t_0 - k_2)} $, for $ i = 1,2, \ldots, d $, which are $ t_0 $-error-correcting and satisfying equality in (\ref{eq lower bound CO}), hence having optimal communication overheads for all $ d \in \mathcal{D} $.
\end{theorem}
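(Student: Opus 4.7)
My plan is to reduce the theorem to a direct application of Theorem \ref{theorem general univeral staircase properties}, using the MRD property of Gabidulin codes together with the Singleton-type bounds (\ref{eq singleton bound general}) and (\ref{eq singleton bound relative}). First I would check the nesting hypothesis. Since $d_h = n-\rho_0$ and $2t_0 + \rho_0 = n - k_1$, we get $d_h - 2t_0 = k_1$; combined with $d_h < d_{h-1} < \cdots < d_1$, this yields $k_1 = d_h - 2t_0 < d_{h-1} - 2t_0 < \cdots < d_1 - 2t_0 \leq n$, so once a basis of $ \mathbb{F}_{q^m} $ over $ \mathbb{F}_q $ is fixed at the outset, the sequence $\mathcal{C}^{(j)} = \mathcal{G}_{d_j - 2t_0}$ sits inside (\ref{eq sequence of gabidulin}) as a strictly increasing chain with $\mathcal{C}^{(h)} = \mathcal{G}_{k_1} = \mathcal{C}_1$, as required.

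Next I would verify the distance hypothesis. Each $ \mathcal{C}^{(j)} = \mathcal{G}_{d_j - 2t_0} $ is MRD of dimension $d_j - 2t_0$, so by (\ref{eq singleton bound relative}) together with $\dim(\mathcal{C}^{(j)}) \leq n$ and $n \leq m$, we obtain $d_R(\mathcal{C}^{(j)}, \mathcal{C}_2) = n - d_j + 2t_0 + 1 \geq n - d_j + 1$. The particular case $j = h$ gives $d_R(\mathcal{C}_1, \mathcal{C}_2) = n - k_1 + 1 > \rho_0$, verifying the remaining hypothesis of Theorem \ref{theorem general univeral staircase properties}. For security, note that $\mathcal{C}_2^\perp = \mathcal{G}_{k_2}^\perp$ is itself MRD of dimension $n - k_2$ (by the well-known duality of Gabidulin codes when $ n \leq m $), so $d_R(\mathcal{C}_2^\perp) = k_2 + 1$ and hence $d_R\bigl(\mathcal{C}_2^\perp, \mathcal{C}^{(1)\perp}\bigr) \geq d_R(\mathcal{C}_2^\perp) = k_2 + 1 > \mu$ whenever $\mu \leq k_2$.

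Now invoking Theorem \ref{theorem general univeral staircase properties}, the information rate claim $\ell = k_1 - k_2$ is immediate, the error/erasure criterion $2t + \rho < n - k_1 + 1$ rewrites as $2t + \rho \leq n - k_1$, and the security criterion has just been verified. Optimality with respect to (\ref{eq lower bound info rate}) is then a numerical check: the bound is $\ell \leq n - 2t - \rho - \mu = k_1 - k_2$. The value of $\alpha$ is simply the specialization of the general formula with $\alpha_j = k^{(j)} - k_2 = d_j - 2t_0 - k_2$.

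Finally, for the preprocessing functions, the error-correction hypothesis of Theorem \ref{theorem general univeral staircase properties} requires $2t_0 < d_R(\mathcal{C}^{(j)}, \mathcal{C}_2) - n + d_j = 2t_0 + 1$, which is automatic. Substituting $k^{(j)} = d_j - 2t_0$ into the $\mathrm{CO}(A)$ expression of Theorem \ref{theorem general univeral staircase properties} gives
\[
\mathrm{CO}(A) \;=\; \frac{\ell(d_j - k^{(j)} + k_2)}{k^{(j)} - k_2} \;=\; \frac{\ell(2t_0 + k_2)}{d_j - 2t_0 - k_2},
\]
which coincides with the lower bound (\ref{eq lower bound CO}) for $t = t_0$, $\mu = k_2$, $d = d_j$, establishing optimality of the communication overhead. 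I do not expect any real obstacle; the only place to be careful is the first paragraph, where a single fixed $\mathbb{F}_q$-basis of $\mathbb{F}_{q^m}$ must be used throughout so that the Gabidulin codes are genuinely nested rather than merely isomorphic, which is exactly the setting of (\ref{eq sequence of gabidulin}).
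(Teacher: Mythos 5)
Your proposal is correct and follows exactly the route the paper intends: the paper states that this theorem ``follows directly'' from Theorem \ref{theorem general univeral staircase properties} together with the properties (\ref{eq properties gabidulin codes}) and the nesting (\ref{eq sequence of gabidulin}), and your write-up simply verifies those hypotheses in detail (including the MRD-ness of $\mathcal{G}_{k_2}^\perp$ for the security condition, a point the paper leaves implicit). All the numerical checks ($d_h - 2t_0 = k_1$, the distance bounds, the value of $\alpha$, and the matching of ${\rm CO}(A)$ with (\ref{eq lower bound CO})) are accurate.
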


Observe that the packet length $ m $ of the original Gabidulin codes is multiplied by $ \alpha $, which depends only on the maximum number of observations, the number of correctable errors and the set of possible erasures $ \mathcal{D} $.

However, there are instances as Example \ref{example 1} where, due to a particular subpacketization, we need not expand the packet length, hence we obtain a strict improvement on the communication overheads at no cost on the rest of the parameters.

We now give the details of Example \ref{example 1} and Example \ref{example 2}, which share the same construction: With the given parameters, the construction in \cite[Th. 11]{silva-universal} gives $ \ell = 16 $ by choosing $ k_1 = 24 $ and $ k_2 = 8 $. However, decomposing the packet length as $ \alpha m = 2048 $, with $ m = 64 $ and $ \alpha = 32 $, we may choose $ \mathcal{D} = \{ 24, 40 \} $, $ k^{(1)} = 40 $, $ k_1 = 24 $ and $ k_2 = 8 $, thus $ \alpha_1 = 32 $, $ \alpha_2 = 16 $, and $ \alpha = 32 $, and the example follows.

\subsection{Coset coding schemes based on MRD cartesian products}

In this subsection, we will make use of cartesian products of Gabidulin codes, which yield again MRD codes, but in the case $ n > m $, in contrast with plain Gabidulin codes as in the previous subsection. To the best of our knowledge, this is the only known family of MRD $ \mathbb{F}_{q^m} $-linear codes in $ \mathbb{F}_{q^m}^n $ when $ n > m $.

Throughout this subsection, we will assume that $ n = lm $, for some positive integer $ l $. Take another integer $ 1 \leq k \leq m $, and consider the cartesian product
$$ \mathcal{C} = \mathcal{G}_{k}^l \subseteq \mathbb{F}_{q^m}^n, $$
where $ \mathcal{G}_{k} \subseteq \mathbb{F}_{q^m}^m $ is a Gabidulin code as in Definition \ref{def gabidulin code}. It is proven in \cite[Cor. 1]{reducible} that
\begin{equation}
\dim(\mathcal{C}) = l k, \quad \textrm{and} \quad d_R(\mathcal{C}) = m - k + 1,
\label{eq properties reducible codes}
\end{equation}
and therefore $ \mathcal{C} $ is MRD. Since the codes $ \mathcal{G}_{k} $ can be taken in a nested sequence for a fixed basis of $ \mathbb{F}_{q^m} $ over $ \mathbb{F}_q $, as in Equation (\ref{eq sequence of gabidulin}), the next result also follows directly from Theorem \ref{theorem general univeral staircase properties}:

\begin{theorem}
Choose integers $ k_1, k_2, t_0 $ and $ \rho_0 $ such that $ 0 \leq k_2 < k_1 \leq m $ and $ 2t_0 + \rho_0 = m - k_1 $, and choose any subset $ \mathcal{D} \subseteq [n-\rho_0,n] $ with elements $ d_h = n-\rho_0 < d_{h-1} < \ldots < d_2 < d_1 $.

Define $ k^{(j)} = d_j - (l-1) m - 2t_0 $ and the $ \mathbb{F}_{q^m} $-linear codes
$$ \mathcal{C}_2 = \mathcal{G}_{k_2}^l \subsetneqq \mathcal{C}^{(j)} = \mathcal{G}_{k^{(j)}}^l \subseteq \mathbb{F}_{q^m}^n, $$
for $ j = 1,2, \ldots, h $, and observe that $ k^{(h)} = k_1 $, hence $ \mathcal{C}^{(h)} = \mathcal{C}_1 = \mathcal{G}_{k_1}^l $. 

The coset coding scheme $ F : \mathbb{F}_{q^m}^{\alpha \times \ell} \longrightarrow \mathbb{F}_{q^m}^{\alpha \times n} $ in Theorem \ref{theorem general univeral staircase properties} based on these codes satisfies $ \ell = l (k_1-k_2) $, is universally $ t $-error and $ \rho $-erasure-correcting if $ 2t + \rho \leq m-k_1 $, and is universally secure under $ \mu $ observations if $ \mu \leq k_2 $. Moreover, it holds that
$$ \alpha = {\rm LCM} \left\lbrace l ( d_j  - 2t_0 - k_2 ) - (l-1) n \mid j =1,2, \ldots, h \right\rbrace. $$

In addition, for any $ d \in \mathcal{D} $ and any full-rank matrix $ A \in \mathbb{F}_q^{d \times n} $, there exist preprocessing functions $ E_{A,i} : \mathbb{F}_q^{\alpha m} \longrightarrow \mathbb{F}_q^{ \ell \alpha m / (l(d - 2 t_0 - k_2) - (l-1)n)} $, for $ i = 1,2, \ldots, d $, which are $ t_0 $-error-correcting and such that
$$ {\rm CO}(A) = \frac{\ell \left( l (2t_0 + k_2) + (l-1)(n-d) \right)}{l (d - 2t_0 - k_2) - (l-1)n}. $$
\end{theorem}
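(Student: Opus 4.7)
The plan is to derive every assertion as a direct specialization of Theorem \ref{theorem general univeral staircase properties} applied to the nested sequence
$$ \mathcal{C}_2 = \mathcal{G}_{k_2}^l \subsetneqq \mathcal{C}^{(h)} = \mathcal{G}_{k_1}^l \subsetneqq \mathcal{C}^{(h-1)} \subsetneqq \cdots \subsetneqq \mathcal{C}^{(1)} = \mathcal{G}_{k^{(1)}}^l \subseteq \mathbb{F}_{q^m}^n, $$
where $ k^{(j)} = d_j - (l-1)m - 2t_0 $. First, I would check that this chain is well defined. Since $ d_1 > d_2 > \cdots > d_h $, the integers $ k^{(j)} $ are strictly decreasing, and $ d_1 \leq n = lm $ together with $ t_0 \geq 0 $ yields $ k^{(1)} \leq m $; on the other end, from $ 2t_0 + \rho_0 = m - k_1 $ one obtains $ k^{(h)} = m - \rho_0 - 2t_0 = k_1 $, confirming $ \mathcal{C}^{(h)} = \mathcal{C}_1 $.

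Next, I would verify the rank-distance hypotheses using (\ref{eq properties reducible codes}), which is the only non-trivial ingredient. Since $ \mathcal{C}^{(j)} $ is a cartesian product of a Gabidulin code, we have $ d_R(\mathcal{C}^{(j)}) = m - k^{(j)} + 1 = n - d_j + 2t_0 + 1 $, whence
$$ d_R(\mathcal{C}^{(j)}, \mathcal{C}_2) \geq d_R(\mathcal{C}^{(j)}) \geq n - d_j + 1, $$
so the standing hypothesis of Theorem \ref{theorem general univeral staircase properties} is satisfied, and the assumption $ 2t + \rho \leq m - k_1 $ gives $ 2t + \rho < d_R(\mathcal{C}_1, \mathcal{C}_2) $. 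The same inequality also delivers the error-correction condition for the preprocessing functions, $ 2t_0 < d_R(\mathcal{C}^{(j)}, \mathcal{C}_2) - n + d_j $, because the right-hand side is at least $ 2t_0 + 1 $. For security, since the dual of a Gabidulin code is again MRD of dimension $ m - k_2 $, applying (\ref{eq properties reducible codes}) to $ \mathcal{C}_2^\perp = (\mathcal{G}_{k_2}^\perp)^l $ gives $ d_R(\mathcal{C}_2^\perp) = k_2 + 1 $, hence $ d_R(\mathcal{C}_2^\perp, \mathcal{C}^{(1)\perp}) \geq k_2 + 1 > \mu $.

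The remaining formulas are a matter of substitution into those provided by Theorem \ref{theorem general univeral staircase properties}. We have $ \ell = \dim(\mathcal{C}_1) - \dim(\mathcal{C}_2) = l(k_1 - k_2) $ and, using $ lm = n $,
$$ \alpha_j = l(k^{(j)} - k_2) = l(d_j - 2t_0 - k_2) - (l-1)n, $$
which matches the stated formula for $ \alpha = {\rm LCM}(\alpha_1, \ldots, \alpha_h) $. For the communication overhead, Theorem \ref{theorem general univeral staircase properties} yields
$$ {\rm CO}(A) = \frac{\ell \, (d_j - lk^{(j)} + lk_2)}{lk^{(j)} - lk_2}, $$
and a direct expansion confirms $ d_j - lk^{(j)} + lk_2 = l(2t_0 + k_2) + (l-1)(n - d_j) $, so the claimed expression for $ {\rm CO}(A) $ follows. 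The genuine content of the argument lies entirely in recognising that the $ l $-fold cartesian product preserves MRD-ness via (\ref{eq properties reducible codes}); every other step is algebraic bookkeeping.
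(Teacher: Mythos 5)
Your proposal is correct and is exactly the route the paper takes: the theorem is stated as a direct specialization of Theorem \ref{theorem general univeral staircase properties} to the nested cartesian products $\mathcal{G}_{k^{(j)}}^l$, using (\ref{eq properties reducible codes}) to verify the rank-distance hypotheses, and the rest is the same substitution/bookkeeping you carry out. Your write-up in fact supplies more of the verification (e.g.\ $d_R(\mathcal{C}_2^\perp)=k_2+1$ and the identity $d_j-lk^{(j)}+lk_2=l(2t_0+k_2)+(l-1)(n-d_j)$) than the paper, which leaves these checks implicit.
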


Observe that the particular case $ l = 1 $ corresponds to the particular case $ n=m $ in Theorem \ref{theorem using gabidulin}.

\section{Applications} \label{sec applications}

\subsection{Universal secure linear network coding}

Consider a network with $ n $ outgoing links from a source and $ N $ ingoing links to a sink, and where the source wants to transmit $ \ell $ packets, encoded into $ n $ packets (all of the same length), to the sink. Linear network coding, introduced in \cite{ahlswede, Koetter2003, linearnetwork}, consists in sending linear combinations over $ \mathbb{F}_q $ of the received packets at each node of the network, which increases throughput with respect to storing and forwarding.

In this scenario, link errors and erasures expand through the network and an eavesdropper may obtain linear combinations of the sent packets. Thus if the coefficients of the final linear combinations are known to the receiver, then a linearly coded network, with link errors, erasures and observations, can be modelled as a coherent linearized noisy wire-tap channel \cite{silva-universal}, as in Definition \ref{def lin channel}. 

Assume that the packet length is at least $ n $, and fix positive integers $ t $, $ \rho $ and $ \mu $ with $ 2t + \rho + \mu < n $ and $ \rho \leq N $. In \cite[Th. 11]{silva-universal} a construction (pairs of Gabidulin codes) is given such that $ \ell = n - 2t - \rho - \mu $, which is optimal due to (\ref{eq lower bound info rate}). 

However, assuming that $ q $ is big enough and the erasure matrix $ A \in \mathbb{F}_q^{N \times n} $ (see Definition \ref{def lin channel}) is taken at random as in \cite{random}, then it will be full-rank with high probability and $ \rho $ can be thought of as a number of erased ingoing links to the sink, due to noise, link failure or the action of the adversary. 

Theorem \ref{theorem using gabidulin} gives an alternative construction to \cite[Th. 11]{silva-universal} with optimal $ \ell = n - 2t - \rho - \mu $, where if more than $ n - \rho $ ingoing links to the sink are available, the sink can contact the corresponding nodes after exchanging feedback on the number of available nodes, and reduce the communication overhead (hence the amount of packets received by the sink) to its optimal value in view of (\ref{eq lower bound CO}).

\subsection{Secure distributed storage with crisscross errors and erasures} \label{subsec distributed}

Errors and erasures occurring along several rows and/or columns of a matrix over $ \mathbb{F}_q $ are called \textit{crisscross errors and erasures} in the literature, and can happen in memory chips and magnetic tapes, for instance (see \cite{roth}). Recently, crisscross error and erasure-correction has gained attention in the context of distributed storage where data is stored in several data centers (columns), which in turn store several blocks of data (rows), where mixed and/or correlated failures may occur (see \cite{swanan}).

In this work, we consider a storage system where data is stored as an $ \alpha m \times n $ matrix over $ \mathbb{F}_q $, where columns are thought of as data centers that are contacted to obtain information from, and rows are blocks of data expanding across the different data centers and sharing correlated errors. More formally, we consider column erasures (equivalently, data centers being available and contacted) together with crisscross errors and where an eavesdropper may listen to a number of columns (data centers). 

We formalize crisscross error-correction in the following definitions, which we take from \cite[Sec. I]{roth}:

\begin{definition} [\textbf{Crisscross weights \cite{roth}}]
A cover of a matrix $ E \in \mathbb{F}_q^{\alpha m \times n} $ is a pair of sets $ X \subseteq [\alpha m] $ and $ Y \subseteq [n] $ such that if $ e_{i,j} \neq 0 $, then $ i \in X $ or $ j \in Y $. We then define the crisscross weight of $ E $ as
\begin{equation}
{\rm wt}_c(E) = \min \left\lbrace \# X + \#Y \mid (X,Y) \subseteq [\alpha m] \times [n] \textrm{ is a cover of } E \right\rbrace .
\label{eq def crisscross weight}
\end{equation}
\end{definition}

We may then formalize crisscross error and erasure-correction, together with security, as follows:

\begin{definition} \label{def universal schemes crisscross}
For a subset $ I \subseteq [n] $, define the matrix $ P_I \in \mathbb{F}_q^{\# I \times n} $ as that constituted by the rows of the $ n \times n $ identity matrix indexed by $ I $. We say that the coset coding scheme $ F : \mathbb{F}_q^{\alpha m \times \ell} \longrightarrow \mathbb{F}_q^{\alpha m \times n} $ is:
\begin{enumerate}
\item
Crisscross $ t $-error and $ \rho $-erasure-correcting if, for every $ I \subseteq [n] $ with $ \# I = n - \rho $, there exist a decoding function $ D_I : \mathbb{F}_q^{\alpha m \times (n-\rho)} \longrightarrow \mathbb{F}_q^{\alpha m \times \ell} $ such that
$$ D_I \left( C P_I^T + E \right) = S, $$
for all $ C \in \mathcal{C}_S $, all $ E \in \mathbb{F}_q^{\alpha m \times (n-\rho)} $ with $ {\rm wt}_c(E) \leq t $, and all $ S \in \mathbb{F}_q^{\alpha m \times \ell} $.
\item
Secure under $ \mu $ column-observations if 
$$ H(W \mid S) = H(S), $$
for any matrix $ W \in \mathbb{F}_q^{\alpha m \times \mu} $ constituted by $ \mu $ columns of $ C = F(S) $, for all $ S \in \mathbb{F}_q^{\alpha m \times \ell} $.
\end{enumerate}
\end{definition}

In this scenario, pieces of data correspond to columns, instead of linear combinations of columns, hence we will consider preprocessing functions $ E_{I,i} : \mathbb{F}_q^{\alpha m} \longrightarrow \mathbb{F}_q^{\beta_i m} $ depending on a subset of columns $ I \subseteq [n] $, where $ i \in I $. Hence we may formalize the crisscross error-correction capability of preprocessing functions as follows:

\begin{definition} \label{def preprocessing f correction crisscross}
For a subset $ I \subseteq [n] $ with $ d = \# I $, the preprocessing functions $ E_{I,i} : \mathbb{F}_q^{\alpha m} \longrightarrow \mathbb{F}_q^{\beta_i m} $, for $ i = 1,2, \ldots, d $, are $ t $-crisscross error-correcting with respect to $ F $ if there exists a decoding function $ D_I : \prod_{i = 1}^d \mathbb{F}_q^{\beta_i m} \longrightarrow \mathbb{F}_q^{\alpha m \times \ell} $ such that
\begin{equation} 
 D_I \left( \left( E_{I,i} \left( \mathbf{c}_i + \mathbf{e}_i \right) \right) _{i = 1}^d \right) = S, 
\label{eq condition preprocessing functions 2}
\end{equation}
where $ C \in \mathcal{C}_S $ and $ \mathbf{c}_i $ denotes the $ i $-th column of $ C $, for $ i = 1,2, \ldots, d $, for all $ S \in \mathbb{F}_q^{\alpha m \times \ell} $ and all error matrices $ E \in \mathbb{F}_q^{\alpha m \times d} $ of crisscross weight at most $ t $ with columns $ \mathbf{e}_1, \mathbf{e}_2, \ldots, \mathbf{e}_d \in \mathbb{F}_q^{\alpha m} $.

The decoding bandwidth and communication overhead of such functions are defined as in Definition \ref{def decoding and comm overhead}.
\end{definition}

We now see that the bounds (\ref{eq lower bound info rate}) and (\ref{eq lower bound CO}) also hold in this context:

\begin{proposition}
If the coset coding scheme $ F : \mathbb{F}_q^{\alpha m \times \ell} \longrightarrow \mathbb{F}_q^{\alpha m \times n} $ is crisscross $ t $-error and $ \rho $-erasure-correcting, and secure under $ \mu $ column-observations, then
\begin{equation}
\ell \leq n- 2t - \rho - \mu. \label{eq lower bound info rate crisscross}
\end{equation}
Moreover, for a subset $ I \subseteq [n] $ with $ d = \# I $ and preprocessing functions $ E_{I,i} : \mathbb{F}_q^{\alpha m} \longrightarrow \mathbb{F}_q^{\beta_i m} $, for $ i = 1,2, \ldots, d $, that are $ t $-crisscross error-correcting with respect to $ F $, it holds that:
\begin{equation}
{\rm CO}(I) \geq \frac{\ell (2t + \mu)}{d - 2t - \mu}, \label{eq lower bound CO crisscross}
\end{equation}
\end{proposition}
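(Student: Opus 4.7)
The plan is to mirror the proofs of Propositions 1 and 2 (the linearized-channel counterparts of (\ref{eq lower bound info rate crisscross}) and (\ref{eq lower bound CO crisscross})) in a column-Hamming setting. The crucial observation that enables the reduction is that any matrix $E$ whose nonzero entries lie in at most $t$ columns admits the cover $(X,Y) = (\emptyset, J)$ for a set $J$ of $t$ column indices, and therefore has crisscross weight at most $t$. Consequently, a scheme that is $t$-crisscross-error-correcting is in particular capable of correcting every error that is supported on at most $t$ columns, which is all that the subsequent arguments will require.

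For (\ref{eq lower bound info rate crisscross}), the key step is to prove that for every $I \subseteq [n]$ with $|I| \geq n - 2t - \rho$, the projection $C_I$ of $C = F(S)$ onto the columns indexed by $I$ determines $S$, i.e.\ $H(S \mid C_I) = 0$. If not, two codewords $C_1 \in \mathcal{C}_{S_1}$ and $C_2 \in \mathcal{C}_{S_2}$ with $S_1 \neq S_2$ would agree on $I$; partitioning $I^c = I_1 \sqcup I_2^a \sqcup I_2^b$ with $|I_1| = \rho$ and $|I_2^a| = |I_2^b| = t$ and setting $Y_I = C_{1,I} = C_{2,I}$, $Y_{I_2^a} = C_{1, I_2^a}$, $Y_{I_2^b} = C_{2, I_2^b}$ (and arbitrary $Y_{I_1}$) exhibits $Y$ as a legitimate received word for both secrets with column-erasure set $I_1$ and a single-$t$-column error (crisscross weight $\leq t$), contradicting the error-correction hypothesis. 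Choosing then disjoint $I, J$ with $|I| = n - 2t - \rho - \mu$ and $|J| = \mu$, I obtain
\begin{equation*}
\alpha m \ell = H(S) = I(S; C_{I\cup J}) = I(S; C_J) + I(S; C_I \mid C_J) \leq H(C_I) \leq \alpha m |I|,
\end{equation*}
using the uniformity of $S$, the identity just established applied to $I \cup J$, the chain rule for mutual information, and $I(S; C_J) = 0$ from $\mu$-column-observation secrecy; this gives (\ref{eq lower bound info rate crisscross}).

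For (\ref{eq lower bound CO crisscross}), I would adapt the proof of Proposition 2 almost verbatim. After reindexing so that $\beta_1 \leq \cdots \leq \beta_d$, the first step is to show that $E_{I,1}, \ldots, E_{I,d-2t}$ are $0$-crisscross-error-correcting: otherwise two secrets $S_1 \neq S_2$ produce identical preprocessed outputs on these $d-2t$ columns, and choosing $\mathbf{e}_i = \mathbf{c}_{2,i} - \mathbf{c}_{1,i}$ for $i \in \{d-2t+1, \ldots, d-t\}$ and $\mathbf{e}_i = \mathbf{c}_{1,i} - \mathbf{c}_{2,i}$ for $i \in \{d-t+1, \ldots, d\}$ yields column-supported error matrices of crisscross weight $\leq t$ that force the decoder to output both $S_1$ and $S_2$. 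Writing $f_i = E_{I,i}(\mathbf{c}_i)$, we then have $H(S \mid f_1, \ldots, f_{d-2t}) = 0$, and since each $f_i$ is a function of $\mathbf{c}_i$, any $\mu$ of the $f_i$ inherit the secrecy of $\mu$ column observations. The proof of \cite[Th. 1]{efficient} now transcribes directly to give both $\sum_{i=1}^{d-2t} \beta_i / \alpha \geq \ell(d-2t)/(d-2t-\mu)$ and $\beta_{d-2t-\mu}/\alpha \geq \ell/(d-2t-\mu)$; using $\beta_{d-2t-\mu} \leq \beta_{d-2t+1} \leq \cdots \leq \beta_d$ to bound the remaining $2t$ summands yields ${\rm DB}(I) \geq \ell d/(d-2t-\mu)$, hence (\ref{eq lower bound CO crisscross}).

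The only genuinely new content is the construction of two-decoding received matrices in the error-correction reductions of the two steps; once that is in place, the entropy manipulations are verbatim transcriptions of their linearized (Proposition 2) or Hamming-metric (\cite[Th. 1]{efficient}) counterparts, so I do not expect a serious obstacle beyond verifying that the chosen error/erasure patterns genuinely have crisscross weight at most $t$ (which follows from the column-cover observation above).
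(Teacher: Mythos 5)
Your proposal is correct and follows essentially the same route as the paper: both bounds are reduced to the erasure-only/Hamming case by the same error-splitting trick (two disjoint $t$-column sets, each carrying an error of crisscross weight at most $t$, make a received word ambiguous between two secrets), after which the rate bound is the classical secret-sharing argument (which the paper obtains by citing \cite[Th. 1]{efficient} and you unpack as an explicit entropy chain) and the overhead bound is the verbatim adaptation of Proposition \ref{lemma lower bounds}.
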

\begin{proof}
Define $ \rho^\prime = \rho + 2t $. We will prove that $ F $ is $ \rho^\prime $-crisscross erasure-correcting. If it is not, then there exists a subset $ I \subseteq [n] $ with $ \# I = n - \rho^\prime $, and there exist $ C_1 \in \mathcal{C}_{S_1} $ and $ C_2 \in \mathcal{C}_{S_2} $, where $ S_1 \neq S_2 $, such that
$$ C_1P^T_I = C_2P^T_I. $$
Next take a set of the form $ I_1 = I \cup J_1 \cup J_2 $, where $ \# I_1 = n - \rho $ and $ t = \# J_1 = \# J_2 $ (recall that $ n - \rho = n - \rho^\prime + 2t $). There exist matrices $ E_1, E_2 \in \mathbb{F}_q^{\alpha m \times (n - \rho)} $ of crisscross weight at most $ t $ such that
$$ C_1P^T_{I_1} + E_1 = C_2P^T_{I_1} + E_2. $$
Hence $ F $ cannot be crisscross $ t $-error and $ \rho $-erasure-correcting, and we reach a contradiction. Now, this implies that $ F $ is a classical secret sharing scheme with alphabet $ \mathbb{F}_q^{\alpha m} $, reconstruction $ \rho^\prime $ and privacy $ \mu $. Thus it follows directly from \cite[Th. 1]{efficient} that
$$ \ell \leq n - \rho^\prime - \mu = n - \rho - 2t - \mu, $$
and we are done. 

Finally, the bound (\ref{eq lower bound CO crisscross}) can be proven in the same way as the bound (\ref{eq lower bound CO}).
\end{proof}

To conclude, we observe that a coset coding scheme, together with preprocessing functions, which are universally (rank) error and erasure-correcting and universally secure in the sense of Definitions \ref{def universal schemes} and \ref{def preprocessing f correction} are also crisscross erasure and error-correcting and secure under a given number of column observations in the sense of Definitions \ref{def universal schemes crisscross} and \ref{def preprocessing f correction crisscross}, with exactly the same parameters. Thus all constructions in this paper can be directly translated into the context of this subsection.

For illustration purposes, we show how to translate Theorem \ref{theorem using gabidulin} to this context, thus obtaining coset coding schemes which are optimal in the sense of (\ref{eq lower bound info rate crisscross}) and (\ref{eq lower bound CO crisscross}) for all parameters, whenever $ n \leq m $.

\begin{corollary} 
Assume $ n \leq m $, choose integers $ k_2, k_1, t_0 $ and $ \rho_0 $ such that $ 0 \leq k_2 < k_1 \leq n $ and $ 2 t_0 + \rho_0 = n - k_1 $, and choose any subset $ \mathcal{D} \subseteq [n-\rho_0, n] $ with elements $ d_h = n-\rho_0 < d_{h-1} < \ldots < d_2 < d_1 $. 

The coset coding scheme $ F : \mathbb{F}_{q^m}^{\alpha \times \ell} \longrightarrow \mathbb{F}_{q^m}^{\alpha \times n} $ in Theorem \ref{theorem using gabidulin} with these parameters satisfies $ \ell = k_1-k_2 $, is crisscross $ t $-error and $ \rho $-erasure-correcting if $ 2t + \rho \leq n-k_1 $, and is secure under $ \mu $ column-observations if $ \mu \leq k_2 $. In particular, the scheme is optimal in the sense of (\ref{eq lower bound info rate crisscross}). Moreover, it holds that
$$ \alpha = {\rm LCM} \left( d_1 - 2t_0 - k_2, d_2 - 2t_0 - k_2, \ldots, d_h - 2t_0 - k_2 \right). $$

In addition, for any $ d \in \mathcal{D} $ and any subset $ I \subseteq [n] $ with $ d = \# I $, there exist preprocessing functions $ E_{I,i} : \mathbb{F}_q^{\alpha m} \longrightarrow \mathbb{F}_q^{ \ell \alpha m / (d - 2 t_0 - k_2)} $, for $ i = 1,2, \ldots, d $, which are $ t_0 $-crisscross error-correcting and satisfying equality in (\ref{eq lower bound CO crisscross}), hence having optimal communication overheads for all $ d \in \mathcal{D} $.
\end{corollary}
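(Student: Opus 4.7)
The plan is to derive the corollary as a direct specialization of Theorem \ref{theorem using gabidulin} via the translation principle stated in the paragraph immediately preceding the corollary. That is, I will argue that every crisscross erasure pattern and every crisscross error pattern (resp.\ every column-observation pattern) is a particular instance of the rank-metric erasure/error/observation model of Definitions \ref{def universal schemes} and \ref{def preprocessing f correction}, so the universal correction and security guarantees of the rank-metric scheme transfer verbatim to the crisscross setting, with the same values of $\ell$, $\alpha$, and communication overhead.

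The translation has three ingredients, which I would spell out in turn. First, for the $\rho$-erasure part, if $I\subseteq [n]$ with $\#I=n-\rho$, then the crisscross erasure pattern is captured by the matrix $A=P_I\in \mathbb{F}_q^{(n-\rho)\times n}$, which has rank $n-\rho$; hence crisscross erasure-correction is a special case of rank-metric erasure-correction with erasure matrix $A=P_I$. Second, for the $t$-error part, I would invoke the well-known inequality $\mathrm{Rk}(E)\leq \mathrm{wt}_c(E)$ for any $E\in\mathbb{F}_q^{\alpha m\times n}$, which follows from the fact that if $(X,Y)$ is a cover of $E$ then the nonzero rows of $E$ lie in a space spanned by the $\#X$ rows indexed by $X$ together with $\#Y$ rank-one matrices supported on the columns indexed by $Y$. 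Thus any crisscross-$t$ error is also a rank-$\leq t$ error, and rank-metric error-correction implies crisscross error-correction with the same parameter. Third, for security, a $\mu$-column-observation corresponds to $W=CP_J^T$ for some $J\subseteq[n]$ with $\#J=\mu$, so that $B=P_J$ has rank $\mu$ and falls within the universal security model.

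With these three observations in place, the first part of the corollary follows directly from Theorem \ref{theorem using gabidulin}: the scheme has $\ell=k_1-k_2$, corrects crisscross $t$ errors and $\rho$ erasures whenever $2t+\rho\leq n-k_1$, and is secure under $\mu$ column-observations whenever $\mu\leq k_2$. The value of $\alpha$ is the same LCM expression. For the preprocessing functions, given $I\subseteq [n]$ with $d=\#I$, I would set $E_{I,i}:=E_{P_I,i}$ from Theorem \ref{theorem using gabidulin}; since $P_I$ is a full-rank matrix in $\mathbb{F}_q^{d\times n}$ and crisscross-$t_0$ errors are rank-$\leq t_0$ errors, the $t_0$-error-correction in the rank sense carries over to $t_0$-crisscross error-correction, with identical output dimensions and identical communication overhead.

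Finally, optimality is immediate: the information-rate bound $\ell\leq n-2t-\rho-\mu$ in (\ref{eq lower bound info rate crisscross}) is tight for $\ell=k_1-k_2$, $2t+\rho=n-k_1$, $\mu=k_2$; and the communication-overhead bound (\ref{eq lower bound CO crisscross}) is achieved with equality because Theorem \ref{theorem using gabidulin} achieves equality in (\ref{eq lower bound CO}), and both bounds have the same form $\ell(2t+\mu)/(d-2t-\mu)$. The main (minor) obstacle I anticipate is the $\mathrm{Rk}(E)\leq \mathrm{wt}_c(E)$ step, which is standard but needs to be cited or briefly justified for the argument to be self-contained; everything else is just a reformulation of already-proven statements under the restriction $A=P_I$, $B=P_J$.
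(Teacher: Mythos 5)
Your proposal is correct and follows essentially the same route as the paper: the corollary is obtained by specializing Theorem \ref{theorem using gabidulin} via the translation principle stated just before it (column erasures as $A=P_I$, column observations as $B=P_J$, and crisscross errors as rank errors via $\mathrm{Rk}(E)\leq \mathrm{wt}_c(E)$), which the paper leaves implicit and you spell out. Your justification of $\mathrm{Rk}(E)\leq \mathrm{wt}_c(E)$ by splitting $E$ into a part supported on the $\#X$ covered rows and a part supported on the $\#Y$ covered columns is the standard and correct one.
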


Observe that optimal crisscross error and erasure-correcting coding schemes can also be obtained by using maximum distance separable (MDS) codes in $ \mathbb{F}_q^{\alpha m \times n} $, by identifying this vector space with $ \mathbb{F}_q^{\alpha m n} $, as noticed in \cite{roth}. However, such constructions may require extremely large finite fields, for instance $ q > \alpha m n $ for Reed-Solomon codes, whereas rank-metric codes allow to obtain optimal coding schemes with the only constraint $ n \leq m $, being $ q $ unrestricted, allowing in particular using binary fields ($ q = 2 $).

\section{Conclusion and open problems}

In this paper, we have studied the problem of reducing the communication overhead on a noisy wire-tap channel or storage system where data is encoded as a matrix. The method developed in Section \ref{sec general construction} allows to reduce the communication overhead, when more columns are available, at the cost of expanding the packet length (number of rows). However, in the optimal case of pairs of Gabidulin codes (Section \ref{sec optimal schemes}), strict improvements on the communication overheads are possible at no cost on the rest of the parameters, as shown in Example \ref{example 1} for practical instances in the applications. We leave as open problem to study when the packet length need not be expanded. Another interesting open problem is to extend our method to codes that are linear over the base field $ \mathbb{F}_q $, instead of the extension field $ \mathbb{F}_{q^m} $. This would allow to use all possible MRD codes \cite{delsartebilinear}.

\appendix

\section{Proof of Lemma \ref{lemma secrecy}} \label{app proof of}

Fix $ \mathbb{F}_{q^m} $-linear codes $ \mathcal{C}_2 \varsubsetneqq \mathcal{C}_1 \subseteq \mathbb{F}_{q^m}^n $ and a matrix $ B \in \mathbb{F}_q^{\mu \times n} $ in the rest of the appendix.

We start with an auxiliary result, which is a particular case of \cite[Th. 3]{similarities}:

\begin{lemma} [\textbf{\cite{similarities}}]
It holds that
\begin{equation*}
\begin{split}
d_R(\mathcal{C}_1, \mathcal{C}_2) = \min \{ & {\rm Rk}(A) \mid A \in \mathbb{F}_q^{\nu \times n}, \nu \in \mathbb{N}, \textrm{ and} \\
 & \dim \left( \mathcal{C}_1 \cap {\rm Row}(A) / \mathcal{C}_2 \cap {\rm Row}(A) \right) \geq 1 \},
\end{split}
\end{equation*}
where $ {\rm Row}(A) \subseteq \mathbb{F}_{q^m}^n $ denotes the $ \mathbb{F}_{q^m} $-linear vector space generated by the rows of the matrix $ A \in \mathbb{F}_q^{\nu \times n} $.
\end{lemma}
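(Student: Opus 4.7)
The plan is to reduce the identity to a basic characterization of $ {\rm Rk}_q $ for individual vectors, namely that for any $ \mathbf{e} \in \mathbb{F}_{q^m}^n $, the quantity $ {\rm Rk}_q(\mathbf{e}) $ equals the minimum rank of a matrix $ A \in \mathbb{F}_q^{\nu \times n} $ (over all $ \nu \in \mathbb{N} $) such that $ \mathbf{e} \in {\rm Row}(A) $. Once this auxiliary claim is available, the lemma will follow by a straightforward double-inequality argument.

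First I would prove the auxiliary claim in both directions. For the upper bound $ {\rm Rk}_q(\mathbf{e}) \leq {\rm Rk}(A) $, assume $ \mathbf{e} \in {\rm Row}(A) $ with $ {\rm Rk}(A) = s $. By row-reducing $ A $ over $ \mathbb{F}_q $, pick $ s $ linearly independent rows $ \mathbf{a}_1, \ldots, \mathbf{a}_s \in \mathbb{F}_q^n $ spanning $ {\rm Row}(A) $ over $ \mathbb{F}_{q^m} $, write $ \mathbf{e} = \sum_{i=1}^s \lambda_i \mathbf{a}_i $ with $ \lambda_i \in \mathbb{F}_{q^m} $, expand each $ \lambda_i = \sum_{u=1}^m c_{u,i} \gamma_u $, and observe that the $ u $-th row of $ \varphi_n(\mathbf{e}) $ equals $ \sum_{i=1}^s c_{u,i} \mathbf{a}_i $; since all $ m $ such rows lie in the $ s $-dimensional $ \mathbb{F}_q $-span of $ \mathbf{a}_1, \ldots, \mathbf{a}_s $, we conclude $ {\rm Rk}_q(\mathbf{e}) \leq s $. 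For the converse, put $ r = {\rm Rk}_q(\mathbf{e}) $, let $ A_0 \in \mathbb{F}_q^{r \times n} $ have rows forming an $ \mathbb{F}_q $-basis of the row space of $ \varphi_n(\mathbf{e}) $, and use the decomposition $ \mathbf{e} = \sum_{u=1}^m \gamma_u \mathbf{d}_u $ (where $ \mathbf{d}_u $ is the $ u $-th row of $ \varphi_n(\mathbf{e}) $) to place $ \mathbf{e} $ in $ {\rm Row}(A_0) $, while $ {\rm Rk}(A_0) = r $.

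With the auxiliary characterization in hand, the $ \leq $-direction of the lemma is immediate: take $ \mathbf{e} \in \mathcal{C}_1 \setminus \mathcal{C}_2 $ realizing $ {\rm Rk}_q(\mathbf{e}) = d_R(\mathcal{C}_1, \mathcal{C}_2) $, and let $ A $ be a matrix of rank $ {\rm Rk}_q(\mathbf{e}) $ containing $ \mathbf{e} $ in its row space; then $ \mathbf{e} \in \mathcal{C}_1 \cap {\rm Row}(A) $ but $ \mathbf{e} \notin \mathcal{C}_2 \cap {\rm Row}(A) $, so the quotient has positive dimension and the right-hand minimum is at most $ {\rm Rk}(A) = d_R(\mathcal{C}_1, \mathcal{C}_2) $. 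For the $ \geq $-direction, take any admissible $ A $ on the right-hand side and any $ \mathbf{e} \in (\mathcal{C}_1 \cap {\rm Row}(A)) \setminus \mathcal{C}_2 $; the auxiliary claim yields $ {\rm Rk}_q(\mathbf{e}) \leq {\rm Rk}(A) $, and by definition $ d_R(\mathcal{C}_1, \mathcal{C}_2) \leq {\rm Rk}_q(\mathbf{e}) $.

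The only subtle ingredient is the auxiliary characterization itself, which is really the identification $ \mathbb{F}_{q^m}^n \cong \mathbb{F}_q^n \otimes_{\mathbb{F}_q} \mathbb{F}_{q^m} $ recast in matrix form through $ \varphi_n $; one must take care that the $ \mathbb{F}_q $-row space of $ A $ and the $ \mathbb{F}_{q^m} $-row space $ {\rm Row}(A) $ have the same dimension (over their respective fields) and the same $ \mathbb{F}_q $-generating sets, so that row-reduction over $ \mathbb{F}_q $ preserves the ambient $ \mathbb{F}_{q^m} $-span. Once this bookkeeping is done, the rest of the argument is mechanical.
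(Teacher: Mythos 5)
Your proof is correct. Note, however, that the paper itself does not prove this lemma at all: it is stated in the appendix only as a citation, ``a particular case of \cite[Th. 3]{similarities}'', and is then used as a black box to derive Lemma \ref{lemma secrecy}. So there is no in-paper argument to compare against; what you have done is supply a self-contained proof of the cited fact. Your route is the natural one: you first establish the pointwise characterization $ {\rm Rk}_q(\mathbf{e}) = \min \{ {\rm Rk}(A) \mid \mathbf{e} \in {\rm Row}(A) \} $, with the upper bound coming from expanding $ \mathbf{e} = \sum_i \lambda_i \mathbf{a}_i $ over an $ \mathbb{F}_q $-basis of the row space of $ A $ and the lower bound from $ \mathbf{e} = \sum_u \gamma_u \mathbf{d}_u $ with $ \mathbf{d}_u $ the rows of $ \varphi_n(\mathbf{e}) $; both directions are carried out correctly, and your remark that an $ \mathbb{F}_q $-spanning set of the rows of $ A $ is automatically an $ \mathbb{F}_{q^m} $-spanning set of $ {\rm Row}(A) $ is exactly the bookkeeping point that makes the upper bound work. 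The reduction of the lemma to this characterization is also sound, since $ \dim \left( \mathcal{C}_1 \cap {\rm Row}(A) / \mathcal{C}_2 \cap {\rm Row}(A) \right) \geq 1 $ is equivalent to the existence of $ \mathbf{e} \in \left( \mathcal{C}_1 \cap {\rm Row}(A) \right) \setminus \mathcal{C}_2 $, which is precisely what both inequalities of your double-inequality argument need. In short: a correct and complete proof of a statement the paper leaves to an external reference.
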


Given an $ \mathbb{F}_{q^m} $-linear code $ \mathcal{C} \subseteq \mathbb{F}_{q^m}^n $, consider the map $ \mathcal{C} \longrightarrow \mathcal{C} B^T $ defined by $ \mathbf{c} \mapsto \mathbf{c} B^T $, for $ \mathbf{c} \in \mathcal{C} $. It is surjective and its kernel is $ \mathcal{C} \cap \left( \mathcal{V}^\perp \right) $, where $ \mathcal{V} = {\rm Row}(B) $. Therefore
$$ \dim(\mathcal{C}) = \dim \left( \mathcal{C} B^T \right) + \dim \left( \mathcal{C} \cap \left( \mathcal{V}^\perp \right) \right) . $$
Using this equation and computing dimensions, it follows that
\begin{equation}
\dim \left( \mathcal{C}_1 B^T / \mathcal{C}_2 B^T \right) = \dim \left( \mathcal{C}_2^\perp \cap \mathcal{V} / \mathcal{C}_1^\perp \cap \mathcal{V} \right).
\label{eq proof of lemma then equal}
\end{equation}
Now, using that $ {\rm Rk}(B) < d_R \left( \mathcal{C}_2^\perp, \mathcal{C}_1^\perp \right) $ and the previous lemma, it holds that $ \mathcal{C}_2^\perp \cap \mathcal{V} = \mathcal{C}_1^\perp \cap \mathcal{V} $. Hence the result follows by (\ref{eq proof of lemma then equal}).

\section*{Acknowledgement}

The author gratefully acknowledges the support from The Danish Council for Independent Research (Grant No. DFF-4002-00367 and Grant No. DFF-5137-00076B ``EliteForsk-Rejsestipendium''), and is thankful for the guidance of his advisors Olav Geil and Diego Ruano. This manuscript was written in part when the author was visiting the University of Toronto. He greatly appreciates the support and hospitality of Frank R. Kschischang, and is thankful for valuable discussions on this work.

\bibliographystyle{IEEEtranS}


\end{document}